\newif\if@in@acrolist
\newrobustcmd{\LU}[2]{\if@in@acrolist#1\else#2\fi}
\newcommand{\ACF}[1]{{\@in@acrolisttrue\acf{#1}}}
\DeclareMathOperator*{\argmin}{argmin}
\newtheorem{proposition}{Proposition}
\newtheorem{remark}{Remark}
\newtheorem{assumption}{Assumption}
\newtheorem{lemma}{Lemma}
\newtheorem{corollary}{Corollary}
\DeclareMathAlphabet{\mathppl}{T1}{ppl}{m}{it}
\DeclareMathAlphabet{\mathphv}{T1}{phv}{m}{it}
\DeclareMathAlphabet{\mathpzc}{T1}{pzc}{m}{it}
\newlength{\norlen} \setlength{\norlen}{0.2ex} 
\newcommand{\Vt}[1]{\mathbf{\lowercase{#1}}}
\newcommand{\vtP}{\Vt{P}}
\newcommand{\vtU}{\Vt{U}}
\newcommand{\vtW}{\Vt{W}}
\newcommand{\vtX}{\Vt{X}}
\newcommand{\vtY}{\Vt{Y}}
\newcommand{\vtZ}{\Vt{Z}}
\newcommand{\vtSigma}{\Vt{\boldsymbol{\sigma}}}
\begin{document}

\begin{acronym}[LTE-Advanced]
  \acro{2G}{Second Generation}
  \acro{6G}{Sixth Generation}
  \acro{3-DAP}{3-Dimensional Assignment Problem}
  \acro{AA}{Antenna Array}
  \acro{AC}{Admission Control}
  \acro{AD}{Attack-Decay}
  \acro{ADC}{analog-to-digital converter}
  \acro{ADMM}{alternating direction method of multipliers}
  \acro{ADSL}{Asymmetric Digital Subscriber Line}
  \acro{AHW}{Alternate Hop-and-Wait}
  \acro{AI}{Artificial Intelligence}
  \acro{AirComp}{Over-the-air computation}
  \acro{AMC}{Adaptive Modulation and Coding}
  \acro{AP}{\LU{A}{a}ccess \LU{P}{p}oint}
  \acro{APA}{Adaptive Power Allocation}
  \acro{ARMA}{Autoregressive Moving Average}
  \acro{ARQ}{\LU{A}{a}utomatic \LU{R}{r}epeat \LU{R}{r}equest}
  \acro{ATES}{Adaptive Throughput-based Efficiency-Satisfaction Trade-Off}
  \acro{AWGN}{additive white Gaussian noise}
  \acro{BAA}{\LU{B}{b}roadband \LU{A}{a}nalog \LU{A}{a}ggregation}
  \acro{BB}{Branch and Bound}
  \acro{BCD}{block coordinate descent}
  \acro{BD}{Block Diagonalization}
  \acro{BER}{Bit Error Rate}
  \acro{BF}{Best Fit}
  \acro{BFD}{bidirectional full duplex}
  \acro{BLER}{BLock Error Rate}
  \acro{BPC}{Binary Power Control}
  \acro{BPSK}{Binary Phase-Shift Keying}
  \acro{BRA}{Balanced Random Allocation}
  \acro{BS}{base station}
  \acro{BSUM}{block successive upper-bound minimization}
  \acro{CAP}{Combinatorial Allocation Problem}
  \acro{CAPEX}{Capital Expenditure}
  \acro{CBF}{Coordinated Beamforming}
  \acro{CBR}{Constant Bit Rate}
  \acro{CBS}{Class Based Scheduling}
  \acro{CC}{Congestion Control}
  \acro{CDF}{Cumulative Distribution Function}
  \acro{CDMA}{Code-Division Multiple Access}
  \acro{CE}{\LU{C}{c}hannel \LU{E}{e}stimation}
  \acro{CL}{Closed Loop}
  \acro{CLPC}{Closed Loop Power Control}
  \acro{CML}{centralized machine learning}
  \acro{CNR}{Channel-to-Noise Ratio}
  \acro{CNN}{\LU{C}{c}onvolutional \LU{N}{n}eural \LU{N}{n}etwork}
  \acro{CPA}{Cellular Protection Algorithm}
  \acro{CPICH}{Common Pilot Channel}
  \acro{CoCoA}{\LU{C}{c}ommunication efficient distributed dual \LU{C}{c}oordinate \LU{A}{a}scent}
  \acro{CoMAC}{\LU{C}{c}omputation over \LU{M}{m}ultiple-\LU{A}{a}ccess \LU{C}{c}hannels}
  \acro{CoMP}{Coordinated Multi-Point}
  \acro{CQI}{Channel Quality Indicator}
  \acro{CRM}{Constrained Rate Maximization}
	\acro{CRN}{Cognitive Radio Network}
  \acro{CS}{Coordinated Scheduling}
  \acro{CSI}{\LU{C}{c}hannel \LU{S}{s}tate \LU{I}{i}nformation}
  \acro{CSMA}{\LU{C}{c}arrier \LU{S}{s}ense \LU{M}{m}ultiple \LU{A}{a}ccess}
  \acro{CUE}{Cellular User Equipment}
  \acro{D2D}{device-to-device}
  \acro{DAC}{digital-to-analog converter}
  \acro{DC}{direct current}
  \acro{DCA}{Dynamic Channel Allocation}
  \acro{DE}{Differential Evolution}
  \acro{DFT}{Discrete Fourier Transform}
  \acro{DIST}{Distance}
  \acro{DL}{downlink}
  \acro{DMA}{Double Moving Average}
  \acro{DML}{Distributed Machine Learning}
  \acro{DMRS}{demodulation reference signal}
  \acro{D2DM}{D2D Mode}
  \acro{DMS}{D2D Mode Selection}
  \acro{DNN}{Deep Neural Network}
  \acro{DPC}{Dirty Paper Coding}
  \acro{DRA}{Dynamic Resource Assignment}
  \acro{DSA}{Dynamic Spectrum Access}
  \acro{DSGD}{\LU{D}{d}istributed \LU{S}{s}tochastic \LU{G}{g}radient \LU{D}{d}escent}
  \acro{DSM}{Delay-based Satisfaction Maximization}
  \acro{ECC}{Electronic Communications Committee}
  \acro{EFLC}{Error Feedback Based Load Control}
  \acro{EI}{Efficiency Indicator}
  \acro{eNB}{Evolved Node B}
  \acro{EPA}{Equal Power Allocation}
  \acro{EPC}{Evolved Packet Core}
  \acro{EPS}{Evolved Packet System}
  \acro{E-UTRAN}{Evolved Universal Terrestrial Radio Access Network}
  \acro{ES}{Exhaustive Search}
  \acro{FD}{\LU{F}{f}ederated \LU{D}{d}istillation}
  \acro{FDD}{frequency division duplex}
  \acro{FDM}{Frequency Division Multiplexing}
  \acro{FDMA}{\LU{F}{f}requency \LU{D}{d}ivision \LU{M}{m}ultiple \LU{A}{a}ccess}
  \acro{FedAvg}{\LU{F}{f}ederated \LU{A}{a}veraging}
  \acro{FER}{Frame Erasure Rate}
  \acro{FF}{Fast Fading}
  \acro{FL}{Federated Learning}
  \acro{FSB}{Fixed Switched Beamforming}
  \acro{FST}{Fixed SNR Target}
  \acro{FTP}{File Transfer Protocol}
  \acro{GA}{Genetic Algorithm}
  \acro{GBR}{Guaranteed Bit Rate}
  \acro{GLR}{Gain to Leakage Ratio}
  \acro{GOS}{Generated Orthogonal Sequence}
  \acro{GPL}{GNU General Public License}
  \acro{GRP}{Grouping}
  \acro{HARQ}{Hybrid Automatic Repeat Request}
  \acro{HD}{half-duplex}
  \acro{HMS}{Harmonic Mode Selection}
  \acro{HOL}{Head Of Line}
  \acro{HSDPA}{High-Speed Downlink Packet Access}
  \acro{HSPA}{High Speed Packet Access}
  \acro{HTTP}{HyperText Transfer Protocol}
  \acro{ICMP}{Internet Control Message Protocol}
  \acro{ICI}{Intercell Interference}
  \acro{ID}{Identification}
  \acro{IETF}{Internet Engineering Task Force}
  \acro{ILP}{Integer Linear Program}
  \acro{JRAPAP}{Joint RB Assignment and Power Allocation Problem}
  \acro{UID}{Unique Identification}
  \acro{IID}{\LU{I}{i}ndependent and \LU{I}{i}dentically \LU{D}{d}istributed}
  \acro{IIR}{Infinite Impulse Response}
  \acro{ILP}{Integer Linear Problem}
  \acro{IMT}{International Mobile Telecommunications}
  \acro{INV}{Inverted Norm-based Grouping}
  \acro{IoT}{Internet of Things}
  \acro{IP}{Integer Programming}
  \acro{IPv6}{Internet Protocol Version 6}
  \acro{ISD}{Inter-Site Distance}
  \acro{ISI}{Inter Symbol Interference}
  \acro{ITU}{International Telecommunication Union}
  \acro{JAFM}{joint assignment and fairness maximization}
  \acro{JAFMA}{joint assignment and fairness maximization algorithm}
  \acro{JOAS}{Joint Opportunistic Assignment and Scheduling}
  \acro{JOS}{Joint Opportunistic Scheduling}
  \acro{JP}{Joint Processing}
	\acro{JS}{Jump-Stay}
  \acro{KKT}{Karush-Kuhn-Tucker}
  \acro{L3}{Layer-3}
  \acro{LAC}{Link Admission Control}
  \acro{LA}{Link Adaptation}
  \acro{LC}{Load Control}
  \acro{LDC}{\LU{L}{l}earning-\LU{D}{d}riven \LU{C}{c}ommunication}
  \acro{LOS}{line of sight}
  \acro{LP}{Linear Programming}
  \acro{LTE}{Long Term Evolution}
	\acro{LTE-A}{\ac{LTE}-Advanced}
  \acro{LTE-Advanced}{Long Term Evolution Advanced}
  \acro{M2M}{Machine-to-Machine}
  \acro{MAC}{multiple access channel}
  \acro{MANET}{Mobile Ad hoc Network}
  \acro{MC}{Modular Clock}
  \acro{MCS}{Modulation and Coding Scheme}
  \acro{MDB}{Measured Delay Based}
  \acro{MDI}{Minimum D2D Interference}
  \acro{MF}{Matched Filter}
  \acro{MG}{Maximum Gain}
  \acro{MH}{Multi-Hop}
  \acro{MIMO}{\LU{M}{m}ultiple \LU{I}{i}nput \LU{M}{m}ultiple \LU{O}{o}utput}
  \acro{MINLP}{mixed integer nonlinear programming}
  \acro{MIP}{Mixed Integer Programming}
  \acro{MISO}{multiple input single output}
  \acro{ML}{Machine Learning}
  \acro{MLWDF}{Modified Largest Weighted Delay First}
  \acro{MME}{Mobility Management Entity}
  \acro{MMSE}{minimum mean squared error}
  \acro{MOS}{Mean Opinion Score}
  \acro{MPF}{Multicarrier Proportional Fair}
  \acro{MRA}{Maximum Rate Allocation}
  \acro{MR}{Maximum Rate}
  \acro{MRC}{Maximum Ratio Combining}
  \acro{MRT}{Maximum Ratio Transmission}
  \acro{MRUS}{Maximum Rate with User Satisfaction}
  \acro{MS}{Mode Selection}
  \acro{MSE}{\LU{M}{m}ean \LU{S}{s}quared \LU{E}{e}rror}
  \acro{MSI}{Multi-Stream Interference}
  \acro{MTC}{Machine-Type Communication}
  \acro{MTSI}{Multimedia Telephony Services over IMS}
  \acro{MTSM}{Modified Throughput-based Satisfaction Maximization}
  \acro{MU-MIMO}{Multi-User Multiple Input Multiple Output}
  \acro{MU}{Multi-User}
  \acro{NAS}{Non-Access Stratum}
  \acro{NB}{Node B}
	\acro{NCL}{Neighbor Cell List}
  \acro{NLP}{Nonlinear Programming}
  \acro{NLOS}{non-line of sight}
  \acro{NMSE}{Normalized Mean Square Error}
  \acro{NN}{Neural Network}
  \acro{NOMA}{\LU{N}{n}on-\LU{O}{o}rthogonal \LU{M}{m}ultiple \LU{A}{a}ccess}
  \acro{NORM}{Normalized Projection-based Grouping}
  \acro{NP}{non-polynomial time}
  \acro{NRT}{Non-Real Time}
  \acro{NSPS}{National Security and Public Safety Services}
  \acro{O2I}{Outdoor to Indoor}
  \acro{OFDMA}{\LU{O}{o}rthogonal \LU{F}{f}requency \LU{D}{d}ivision \LU{M}{m}ultiple \LU{A}{a}ccess}
  \acro{OFDM}{Orthogonal Frequency Division Multiplexing}
  \acro{OFPC}{Open Loop with Fractional Path Loss Compensation}
	\acro{O2I}{Outdoor-to-Indoor}
  \acro{OL}{Open Loop}
  \acro{OLPC}{Open-Loop Power Control}
  \acro{OL-PC}{Open-Loop Power Control}
  \acro{OPEX}{Operational Expenditure}
  \acro{ORB}{Orthogonal Random Beamforming}
  \acro{JO-PF}{Joint Opportunistic Proportional Fair}
  \acro{OSI}{Open Systems Interconnection}
  \acro{PAIR}{D2D Pair Gain-based Grouping}
  \acro{PAPR}{Peak-to-Average Power Ratio}
  \acro{P2P}{Peer-to-Peer}
  \acro{PC}{Power Control}
  \acro{PCI}{Physical Cell ID}
  \acro{PDCCH}{physical downlink control channel}
  \acro{PDD}{penalty dual decomposition}
  \acro{PDF}{Probability Density Function}
  \acro{PER}{Packet Error Rate}
  \acro{PF}{Proportional Fair}
  \acro{P-GW}{Packet Data Network Gateway}
  \acro{PL}{Pathloss}
  \acro{PRB}{Physical Resource Block}
  \acro{PROJ}{Projection-based Grouping}
  \acro{ProSe}{Proximity Services}
  \acro{PS}{\LU{P}{p}arameter \LU{S}{s}erver}
  \acro{PSO}{Particle Swarm Optimization}
  \acro{PUCCH}{physical uplink control channel}
  \acro{PZF}{Projected Zero-Forcing}
  \acro{QAM}{Quadrature Amplitude Modulation}
  \acro{QoS}{quality of service}
  \acro{QPSK}{Quadri-Phase Shift Keying}
  \acro{RAISES}{Reallocation-based Assignment for Improved Spectral Efficiency and Satisfaction}
  \acro{RAN}{Radio Access Network}
  \acro{RA}{Resource Allocation}
  \acro{RAT}{Radio Access Technology}
  \acro{RATE}{Rate-based}
  \acro{RB}{resource block}
  \acro{RBG}{Resource Block Group}
  \acro{REF}{Reference Grouping}
  \acro{RF}{radio frequency}
  \acro{RLC}{Radio Link Control}
  \acro{RM}{Rate Maximization}
  \acro{RNC}{Radio Network Controller}
  \acro{RND}{Random Grouping}
  \acro{RRA}{Radio Resource Allocation}
  \acro{RRM}{\LU{R}{r}adio \LU{R}{r}esource \LU{M}{m}anagement}
  \acro{RSCP}{Received Signal Code Power}
  \acro{RSRP}{reference signal receive power}
  \acro{RSRQ}{Reference Signal Receive Quality}
  \acro{RR}{Round Robin}
  \acro{RRC}{Radio Resource Control}
  \acro{RSSI}{received signal strength indicator}
  \acro{RT}{Real Time}
  \acro{RU}{Resource Unit}
  \acro{RUNE}{RUdimentary Network Emulator}
  \acro{RV}{Random Variable}
  \acro{SAC}{Session Admission Control}
  \acro{SCM}{Spatial Channel Model}
  \acro{SC-FDMA}{Single Carrier - Frequency Division Multiple Access}
  \acro{SD}{Soft Dropping}
  \acro{S-D}{Source-Destination}
  \acro{SDPC}{Soft Dropping Power Control}
  \acro{SDMA}{Space-Division Multiple Access}
  \acro{SDR}{semidefinite relaxation}
  \acro{SDP}{semidefinite programming}
  \acro{SER}{Symbol Error Rate}
  \acro{SES}{Simple Exponential Smoothing}
  \acro{S-GW}{Serving Gateway}
  \acro{SGD}{\LU{S}{s}tochastic \LU{G}{g}radient \LU{D}{d}escent}  
  \acro{SINR}{signal-to-interference-plus-noise ratio}
  \acro{SI}{self-interference}
  \acro{SIP}{Session Initiation Protocol}
  \acro{SISO}{\LU{S}{s}ingle \LU{I}{i}nput \LU{S}{s}ingle \LU{O}{o}utput}
  \acro{SIMO}{Single Input Multiple Output}
  \acro{SIR}{Signal to Interference Ratio}
  \acro{SLNR}{Signal-to-Leakage-plus-Noise Ratio}
  \acro{SMA}{Simple Moving Average}
  \acro{SNR}{\LU{S}{s}ignal-to-\LU{N}{n}oise \LU{R}{r}atio}
  \acro{SORA}{Satisfaction Oriented Resource Allocation}
  \acro{SORA-NRT}{Satisfaction-Oriented Resource Allocation for Non-Real Time Services}
  \acro{SORA-RT}{Satisfaction-Oriented Resource Allocation for Real Time Services}
  \acro{SPF}{Single-Carrier Proportional Fair}
  \acro{SRA}{Sequential Removal Algorithm}
  \acro{SRS}{sounding reference signal}
  \acro{SU-MIMO}{Single-User Multiple Input Multiple Output}
  \acro{SU}{Single-User}
  \acro{SVD}{Singular Value Decomposition}
  \acro{SVM}{\LU{S}{s}upport \LU{V}{v}ector \LU{M}{m}achine}
  \acro{TCP}{Transmission Control Protocol}
  \acro{TDD}{time division duplex}
  \acro{TDMA}{\LU{T}{t}ime \LU{D}{d}ivision \LU{M}{m}ultiple \LU{A}{a}ccess}
  \acro{TNFD}{three node full duplex}
  \acro{TETRA}{Terrestrial Trunked Radio}
  \acro{TP}{Transmit Power}
  \acro{TPC}{Transmit Power Control}
  \acro{TTI}{transmission time interval}
  \acro{TTR}{Time-To-Rendezvous}
  \acro{TSM}{Throughput-based Satisfaction Maximization}
  \acro{TU}{Typical Urban}
  \acro{UE}{\LU{U}{u}ser \LU{E}{e}quipment}
  \acro{UEPS}{Urgency and Efficiency-based Packet Scheduling}
  \acro{UL}{uplink}
  \acro{UMTS}{Universal Mobile Telecommunications System}
  \acro{URI}{Uniform Resource Identifier}
  \acro{URM}{Unconstrained Rate Maximization}
  \acro{VR}{Virtual Resource}
  \acro{VoIP}{Voice over IP}
  \acro{WAN}{Wireless Access Network}
  \acro{WCDMA}{Wideband Code Division Multiple Access}
  \acro{WF}{Water-filling}
  \acro{WiMAX}{Worldwide Interoperability for Microwave Access}
  \acro{WINNER}{Wireless World Initiative New Radio}
  \acro{WLAN}{Wireless Local Area Network}
  \acro{WMMSE}{weighted minimum mean square error}
  \acro{WMPF}{Weighted Multicarrier Proportional Fair}
  \acro{WPF}{Weighted Proportional Fair}
  \acro{WSN}{Wireless Sensor Network}
  \acro{WWW}{World Wide Web}
  \acro{XIXO}{(Single or Multiple) Input (Single or Multiple) Output}
  \acro{ZF}{Zero-Forcing}
  \acro{ZMCSCG}{Zero Mean Circularly Symmetric Complex Gaussian}
\end{acronym}

\title{Over-the-Air Federated Learning with Retransmissions (Extended Version)}

\author{Henrik~Hellström,~\IEEEmembership{Student Member,~IEEE,}
        Viktoria~Fodor,~\IEEEmembership{Member,~IEEE,}
        and~Carlo~Fischione,~\IEEEmembership{Senior~Member,~IEEE}
\thanks{All authors were with the School
of Electrical Engineering and Computer Science, KTH - Royal Intitute of Technology, Stockholm,
Sweden. E-mails: (hhells@kth.se, vjfodor@kth.se, carlofi@kth.se).}
\thanks{This work has been submitted to the IEEE for possible publication.  Copyright may be transferred without notice, after which this version may no longer be accessible.}}

\maketitle

\begin{abstract}
Motivated by increasing computational capabilities of wireless devices, as well as unprecedented levels of user- and device-generated data, new distributed machine learning (ML) methods have emerged. In the wireless community, Federated Learning (FL) is of particular interest due to its communication efficiency and its ability to deal with the problem of non-IID data. FL training can be accelerated by a wireless communication method called Over-the-Air Computation (AirComp) which harnesses the interference of simultaneous uplink transmissions to efficiently aggregate model updates. However, since AirComp utilizes analog communication, it introduces inevitable estimation errors. In this paper, we study the impact of such estimation errors on the convergence of FL and propose retransmissions as a method to improve FL convergence over resource-constrained wireless networks. First, we derive the optimal AirComp power control scheme with retransmissions over static channels. Then, we investigate the performance of Over-the-Air FL with retransmissions and find two upper bounds on the FL loss function. Finally, we propose a heuristic for selecting the optimal number of retransmissions, which can be calculated before training the ML model. Numerical results demonstrate that the introduction of retransmissions can lead to improved ML performance, without incurring extra costs in terms of communication or computation. Additionally, we provide simulation results on our heuristic which indicate that it can correctly identify the optimal number of retransmissions for different wireless network setups and machine learning problems.

\end{abstract}

\begin{IEEEkeywords}
Federated Learning, Over-the-Air Computation, Retransmissions.
\end{IEEEkeywords}

%
\IEEEpeerreviewmaketitle


\section{Introduction}
%
%
%
%
\IEEEPARstart{T}{he} data collection rate in wireless devices is growing at an exceptional speed due to the increasing adoption of smartphones, tablets, and \ac{IoT} devices \cite{ericsson2020, pewresearch2015}. These devices are expected to provide a broad range of \ac{AI} services in \ac{6G} networks, such as predictive healthcare \cite{jagannathan2020predictive}, search-and-rescue drones \cite{brik2020federated}, and environmental monitoring \cite{knoll2019large}. As a consequence, new distributed machine learning methods, such as \ac{FL}, have become essential to enable privacy-preserving and communication-efficient model training \cite{mcmahan2017communication}. A recent survey on open problems of \ac{FL}, argues that communication is often a primary bottleneck for \ac{FL} because wireless links operate at low rates that can be both expensive and unreliable \cite{kairouz2019advances}. Communication-efficient \ac{FL} is investigated thoroughly in \cite{konevcny2017federated}, where various compression techniques such as quantization, random rotation, and subsampling are evaluated. In \cite{hellstrom2020wireless, zhu2020toward, amiri2020machine, chen2020joint} it is established that new wireless methods can greatly improve the communication efficiency of edge \ac{AI}.

A relatively unexplored approach to wireless communication, called \ac{AirComp} (also known as computation over multiple-access channels), has recently been adapted to support \ac{ML} services \cite{nazer2007computation, yang2020federated}. \ac{AirComp} is an analog communication scheme that orders its users to communicate simultaneously over the same frequency band, thereby promoting interference. This interference is leveraged to compute a function of the transmitted messages by utilizing the superposition property of the wireless channel \cite{goldenbaum2013harnessing}. By appropriately precoding the transmitted signals, many functions can be calculated over the air, for instance, the arithmetic mean, product, and min functions \cite{abari2016over}. In \ac{FL}, the central server is interested in collecting the arithmetic mean of model updates from the participating devices, and therefore \ac{AirComp} is a suitable communication solution \cite{zhu2020toward}.

Compared to conventional point-to-point digital communications, \ac{AirComp} is very attractive from a communication-efficiency standpoint, with throughput gains approximately proportional to the number of users \cite{nazer2007computation}. The reason for this drastic improvement is that the entire wireless spectrum can be utilized concurrently by all devices, rather than dividing it and allocating a smaller resource block to each device. Additionally, \ac{AirComp} obfuscates the participating users since the central server directly receives the arithmetic mean rather than the individual model updates, thereby enhancing privacy \cite{hasirciouglu2021private}.

Currently, \ac{AirComp} is reliant upon specialized hardware and fine synchronization that might be difficult to achieve in practice \cite{goldenbaum2013robust}. Additionally, \ac{AirComp} is unable to guarantee perfect reconstruction of the transmitted messages at the receiver. Shannon's "fundamental theorem for a discrete channel with noise" establishes that for any degree of noise contamination, it is possible to communicate discrete data with an arbitrarily small frequency of errors \cite{shannon1948mathematical}. However, to achieve a non-zero communication rate, redundant information must be transmitted in the form of a code \cite{bose1960class}. Since the information transmitted in \ac{AirComp} is not discrete, existing codes do not appear to be applicable. Instead, \ac{AirComp} settles for estimating the desired function as closely as possible, while retaining some non-zero estimation error \cite{cao2020optimized}. In \cite{sery2020analog}, it is proved that these errors reduce the convergence rate of \ac{FL}, thereby requiring additional communication rounds to reach an optimum.

In the current \ac{AirComp} literature, the main way of reducing the estimation error is to optimize the transmission powers. In \cite{cao2020optimized} and \cite{liu2020over}, the authors propose a closed-form power control scheme that minimizes the \ac{MSE} between the received signal and the true sum of the sources' messages. They consider single-antenna devices with a peak transmission power constraint. A similar problem is solved in \cite{zang2020over}, where a sum-power constraint is used to preserve the long-term energy consumption of the participating devices. For the case of multiple antennas, no closed-form power control scheme has been found, but \cite{yang2020federated} develops a strong heuristic by using a difference-of-convex-functions representation of the problem. In \cite{li2019wirelessly}, the multi-antenna problem is coupled with wireless power transfer to improve the battery life of participating \ac{IoT} devices. To further improve the power control, \cite{zhang2021gradient} proposes a gradient-statistics aware scheme that learns statistical properties of the model updates to improve the \ac{AirComp} estimation error. Similarly, \cite{fan2021temporal} learns the temporal structure of gradient sparsity to develop a Bayesian prior that improves the estimation. Finally, several works have incorporated intelligent reflective surfaces with \ac{AirComp} to reach substantially lower estimation errors \cite{jiang2019over, wang2020wireless, wang2021federated}.

As a general pattern, none of these works offer avenues to trade-off communication resources for improved estimation. In digital communications, such communication-estimation trade-offs are the main way to reduce errors. For instance, it is standard to adaptively control the modulation order and coding rate to compensate for poor channels \cite{goldsmith1998adaptive}. Unfortunately, none of these approaches are directly compatible with \ac{AirComp} since the communication is analog. In this paper, we take a first step towards enabling this communication-estimation tradeoff for Over-the-Air federated learning with a system we call AirReComp. The contributions of this paper are summarized as follows.
\begin{itemize}
    \item This is the first work that allows communication resources to be traded for improved estimation in Over-the-Air \ac{FL};
    \item A power control scheme for AirReComp is proposed. The proposed scheme is proven to be globally optimal in terms of \ac{MSE} between the estimated and desired function. 
    \item Upper bounds on the \ac{FL} loss function are derived for single-epoch Lipschitz-smooth functions, both for the strongly-convex and convex case.
    \item Using the bound for convex functions, we develop a heuristic for selecting the number of retransmissions that minimize the \ac{FL} loss given a limited energy budget;
    \item To further support the feasibility of AirReComp under non-convex functions, we provide numerical results with \acp{DNN}. These results suggest that AirReComp can outperform state-of-the-art Over-the-Air \ac{FL} in terms of classification accuracy, without increasing the total cost of training.
\end{itemize}
The remainder of the paper is organized as follows. Section \ref{sec:system_model} introduces the system model of Federated Learning, Over-the-Air Computation, and the retransmissions. Section \ref{sec:powercontrol} presents and solves the power control problem to minimize the \ac{MSE} between the desired and received sum. Section \ref{sec:convergence_analysis} provides worst-case analysis on the performance of AirReComp in terms of two upper bounds on the \ac{FL} loss function. These bounds are then used in Section \ref{sec:heuristic} to develop a heuristic for selecting the number of retransmissions that minimizes the \ac{FL} loss without exceeding communication and computation cost constraints. In Section \ref{sec:numerical_results}, the performances of AirReComp and the retransmission selection heuristic are numerically evaluated for non-convex loss functions. Finally, section \ref{sec:conclusion} concludes the work and discusses future work.

Notation: $z$ is a scalar, $\vtZ$ is a vector, and $\mathbf{Z}$ is a matrix. Element $i$ of vector $\vtZ$ is expressed as $z^{(i)}$. To denote element-wise operations of vectors, we overload the scalar equivalent, e.g. $\vtX/\vtY$ is the element-wise division of $\vtX$ and $\vtY$. $\Bar{z}$ denotes the complex conjugate. $\hat{z}$ denotes an estimate of $z$ and $\tilde{\vtZ}$ denotes a normalized version of $\vtZ$.


\section{System Model}
\label{sec:system_model}
In this section, we describe the system model and the AirReComp algorithm. For the reader's convenience, we have included a table with all key variables, see Table \ref{tab:notation}.

We consider a distributed \ac{ML} system consisting of $K$ single-antenna user devices each carrying a distinct dataset $\mathcal{D}_k$ and a single-antenna \ac{PS} which can be reached by all devices in a single hop, as illustrated in Fig. \ref{fig:model}. The objective of the system is to solve the following optimization problem
\begin{equation}
\label{eq:opt_problem}
    \vtW^* = \argmin_\vtW F(\vtW) = \argmin_\vtW \frac{1}{K} \sum_{k=1}^K F_k(\vtW),
\end{equation}
using the datasets at the user devices. The vector $\vtW \in \mathbb{R}^d$ is the $d\times 1$ parameter vector that defines the \ac{ML} model, $F(\vtW)$ is denoted the global loss function, and $F_k(\vtW)$ is a local loss function. The choice of $F_k(\vtW)$ describes the \ac{ML} task, where for instance binary cross entropy loss corresponds to a classification task and mean-squared error to regression. The results of this paper are applicable to any problem that can be expressed as \eqref{eq:opt_problem}, and as such $F_k(\vtW)$ can be seen as an arbitrary loss function

\begin{table}[t]
    \caption{Reference list of variables used in this paper. Ordered alphabetically and by case. }
    \label{tab:notation}
    \begin{center}
    \begin{tabular}{|c|c|}
        \hline
        \textbf{Variable} & \textbf{Interpretation} \\
        \hline
        $E$ & Number of local epochs in each communication round \\
        \hline
        $F(\vtW_n)$ & Federated Learning loss function \\
        \hline
        $K$ & Number of devices \\
        \hline
        $M$ & Number of uplink transmissions per communication round \\
        \hline
        $N$ & Total number of communication rounds \\
        \hline
        $\beta$ & Static learning rate \\
        \hline
        $d$ & Number of model parameters \\
        \hline
        $\eta$ & Post-transmission scalar \\
        \hline
        $h_k$ & Channel fading coefficient from device $k$ to PS \\
        \hline
        $L$ & Lipschitz smoothness parameter of $F(\vtW_n)$ \\
        \hline
        $\mu$ & Strong convexity parameter of $F(\vtW_n)$ \\
        \hline
        $\mu_{n,k}$ & Mean of local update at iteration $n$ and device $k$ \\
        \hline
        $p_k$ & Transmission power at $k$ \\
        \hline
        $\sigma_{n,k}^2$ & Variance of local update at iteration $n$ and device $k$ \\
        \hline
        $\sigma_z^2$ & Variance of additive white Gaussian noise \\
        \hline
        $\vtSigma$ & Variance bound on local and global model difference \\
        \hline
        $\vtW_n$ & Global model parameters at iteration $n$ \\
        \hline
        $\Delta\vtW_n$ & Global model update at iteration $n$ \\
        \hline
        $\vtW_{n,k}$ & Local model parameters at iteration $n$ and device $k$ \\
        \hline
        $\Delta\vtW_{n,k}$ & Local model update \\
        \hline
        $\Delta\tilde{\vtW}_{n,k}$ & Normalized local model update \\
        \hline
        $\Delta\hat{\vtW}_{n,k}$ & Estimate of local model update \\
        \hline
        $\vtX_k$ & Amplitude of transmitted signal from device $k$ \\
        \hline
        $\vtZ$ & Additive white Gaussian noise \\
        \hline
    \end{tabular}
    \end{center}
\end{table}

The uplink wireless channel is modeled as a block-fading \ac{MAC} with additive noise \cite{zhu2019broadband,sery2020analog}. If the $K$ users simultaneously transmit a vector $\vtX_k \in \mathbb{R}^d$ over the \ac{MAC}, the \ac{PS} receives
\begin{equation}
\label{eq:MAC}
    \vtY = \sum_{k=1}^Kh_k\vtX_k+\vtZ,
\end{equation}
where $h_k \in \mathbb{C}$ denotes the channel coefficient from device $k$ to the \ac{PS} and $\vtZ \in \mathbb{C}^d$ denotes \ac{AWGN} with variance $\sigma_z^2$. Such a model is a good approximation of reality if all users transmit simultaneously and synchronously over a wireless channel \cite{abari2016over, cao2020optimized, zhu2019broadband}. For simplicity, we assume error-free broadcast transmission in the downlink, which is an acceptable approximation for most practical scenarios since the \ac{PS} generally has much greater communication capability than the user devices \cite{tran2019federated}.

\subsection{Federated Learning Algorithm}

\begin{figure}[t]
    \begin{tikzpicture}
        \node[inner sep=0pt] (russell) at (0,0)
            {\includegraphics[width=8.5cm]{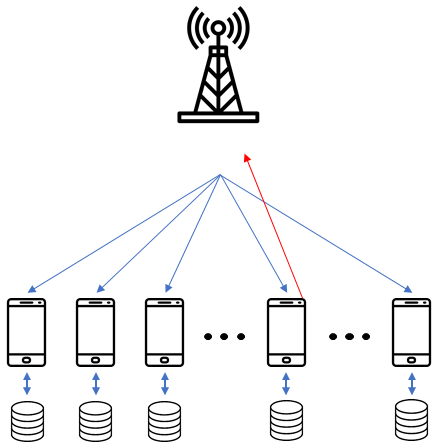}};
        \node[] at (0,1.6) {Parameter Server};
        \node[text=red] at (1.7,-0.6) {$\Delta\vtW_k$};
        \node[text=blue] at (0,1.2) {$\vtW$};
        \node[] at (-3.8,-4.5) {$\mathcal{D}_1$};
        \node[] at (-2.4,-4.5) {$\mathcal{D}_2$};
        \node[] at (-1.1,-4.5) {$\mathcal{D}_3$};
        \node[] at (1.35,-4.5) {$\mathcal{D}_k$};
        \node[] at (3.8,-4.5) {$\mathcal{D}_K$};
    \end{tikzpicture}
    \caption{Illustration of the Parameter Server and wireless network setup used throughout this paper. The training dataset is distributed among the User Devices, and all training takes place at the devices. The role of the Parameter Server is to orchestrate the communication of model updates from and to the User Devices.}
    \label{fig:model}
\end{figure}

\ac{FL} is an iterative algorithm to solve \eqref{eq:opt_problem}, where each iteration is denoted a communication round and consists of downlink broadcast, model training at the user devices, and uplink aggregation. Each of these communication rounds is associated with a computational cost $C_t$ (model training) and a communication cost $C_u$ (uplink aggregation). Considering negligible downlink communication cost, the total cost of \ac{FL} is $(C_t+C_u)N$, where $N$ is the total number of communication rounds.

Communication round $n$ starts when the \ac{PS} broadcasts the global model $\vtW_n$ to all user devices in the downlink \cite{chen2020joint}. Upon receiving the model, user device $k$ solves the local problem
\begin{equation}
\label{eq:opt_problem2}
    \vtW_{n,k}^* = \argmin_{\vtW} F_k(\vtW) =\argmin_{\vtW}\sum_{\vtU_i\in \mathcal{D}_k}l(\vtW, \vtU_{i})
\end{equation}
where $\vtU_{i}$ denotes one training sample and $l(\vtW, \vtU_{i})$ is the sample-wise loss function. Generally, \eqref{eq:opt_problem2} can not be solved exactly. Instead, each device runs $E$ epochs of gradient descent to approximately solve \eqref{eq:opt_problem2} as follows
\begin{equation}
\label{eq:local_training}
    \vtW_{n,k}(i) \leftarrow \vtW_{n,k}(i-1) - \beta\nabla F_k(\vtW_{n,k}(i-1)), \forall i =1,..,E,
\end{equation}
where the first iteration is based on the global model, i.e. $\vtW_{n,k}(0) = \vtW_n$, and $\beta$ is the step size. After executing $E$ epochs, device $k$ calculates a local model update as $\Delta \vtW_{n,k} = (\vtW_n - \vtW_{n,k}(E))/\beta$. After all local model updates have been computed, they are transmitted in the uplink to the \ac{PS}.

At the \ac{PS}, the local model updates are aggregated to form a global model update, written as
\begin{equation}
\label{eq:global_update}
    \Delta\vtW_n = \frac{1}{K}\sum_{k=1}^K\Delta \vtW_{n,k}.
\end{equation}
Finally, the \ac{PS} concludes the communication round by generating the next iteration of the model parameters
\begin{equation}
\label{eq:update_rule}
    \vtW_{n+1} = \vtW_n - \beta\Delta\vtW_n.
\end{equation}
The algorithm repeats until $\vtW_N$ is generated as the final model. After all $N$ communication rounds have completed, the \ac{PS} broadcasts $\vtW_N$ to the user devices, so that every device has the ability to perform inference with the model.

\subsection{Over-the-Air Computation Protocol}
\label{sec:aircomp}
In the uplink aggregation step of \ac{FL}, see \eqref{eq:global_update}, the \ac{PS} reconstructs the sum of $K$ model updates. In this section, we describe how this is achieved by \ac{AirComp}.

To start, the devices normalize the model updates according to
\begin{equation}
\label{eq:normalize}
    \Delta \tilde{\vtW}_{n,k} = \frac{\Delta \vtW_{n,k}-\mu_{n,k}}{\sigma_{n,k}},
\end{equation}
where $\mu_{n,k}$ and $\sigma_{n,k}$ are calculated as
\begin{equation}
\begin{split}
    \mu_{n,k} = \frac{1}{d}\sum_{i=1}^d\Delta w_{n,k}^{(i)} \\
    \sigma_{n,k}^2 = \frac{1}{d-1}\sum_{i=1}^d\left|\Delta w_{n,k}^{(i)} - \mu_{n,k}\right|^2,
\end{split}
\end{equation}
where $\Delta w_{n,k}^{(i)}$ is element $i$ of vector $\Delta \vtW_{n,k}$. For denormalization purposes, the values $\mu_{n,k}$ and $\sigma_{n,k}^2$ are transmitted to the \ac{PS} over a separate control channel. Note that the overhead of transmitting 2 scalars per device is negligible for most \ac{FL} problems. After normalizing, the model updates are communicated over the \ac{MAC} defined in \eqref{eq:MAC} as
\begin{equation}
\label{eq:tx_signal}
    \vtX_k = \Delta \tilde{\vtW}_{n,k}\frac{\Bar{h}_k}{|h_k|}\sqrt{p_k}.
\end{equation}
By combining \eqref{eq:MAC} and \eqref{eq:tx_signal} we get the received value at the \ac{PS}
\begin{equation}
\label{eq:received_signal}
    \vtY = \sum_{k=1}^K|h_k|\sqrt{p_k}\Delta \tilde{\vtW}_{n,k}+\vtZ.
\end{equation}
Ideally, the transmission powers would be chosen as $p_k = 1/|h_k|^2$, which would completely compensate for the fading effect. However, with a natural constraint on the maximum transmission power, $p_k = 1/|h_k|^2$ might be impossible to achieve. Because of this limitation and due to the additive noise, the \ac{PS} can never perfectly reconstruct $\Delta \tilde{\vtW}_n$. Instead, it estimates $\Delta \tilde{\vtW}_n$ by dividing the received signal by a post-transmission scalar $\sqrt{\eta}$ and the number of devices $K$
\begin{equation}
\label{eq:comac_estimate}
    \overline{\vtY} = \frac{\vtY}{\sqrt{\eta}K} = \sum_{k=1}^K \frac{|h_k|\Delta\tilde{\vtW}_{n,k}\sqrt{p_k}}{\sqrt{\eta}K} + \frac{\vtZ}{\sqrt{\eta}K}.
\end{equation}
Coupled with the transmission powers, $\sqrt{\eta}$ has an important role. We see that the ideal choice of the transmission powers is now $\sqrt{p_k}=\sqrt{\eta}/|h_k|$. As such, the selection of a small $\eta$ will reduce the amount of energy required to invert a channel and thereby reduce the fading error. However, lowering $\eta$ will also increase the relative power of the noise. Therefore, the post-transmission scalar $\sqrt{\eta}$ will play the role of a tradeoff parameter between the fading error and the noise-induced error \cite{cao2020optimized}.

In this work we propose AirReComp, which considers retransmissions in the uplink aggregation step. Specifically, the devices transmit the same values in the uplink $M$ times. This way, the signal part of \eqref{eq:comac_estimate} will combine constructively, while the additive noise is different for each transmission. After receiving $M$ values, the \ac{PS} forms its estimate by calculating their arithmetic mean
\begin{equation}
\label{eq:update_estimate}
    \overline{\vtY} = \frac{\sum_{m=1}^M\vtY_m}{M\sqrt{\eta}K} = \sum_{k=1}^K \frac{|h_k|\Delta \tilde{\vtW}_{n,k}\sqrt{p_k}}{\sqrt{\eta}K} + \sum_{m=1}^M\frac{\vtZ_m}{M\sqrt{\eta}K}.
\end{equation}
Here, we have assumed static fading coefficients $h_k$ for the duration of the $M$ transmissions. To finalize the global update, the \ac{PS} takes the real part of $\overline{\vtY}$ and denormalizes the result according to
\begin{equation}
\label{eq:global_ota_update}
    \Delta \hat{\vtW}_n = \operatorname{Re}(\overline{\vtY})\frac{\sum_{k=1}^K\sigma_{n,k}}{K} + \frac{\sum_{k=1}^K\mu_{n,k}}{K}.
\end{equation}
This final estimate is then used in the global update step to generate the next iteration of the model as
\begin{equation}
\label{eq:update_rule_ota}
    \vtW_{n+1} = \vtW_n - \beta\Delta\hat{\vtW}_n.
\end{equation}
The whole AirReComp process is summarized in Algorithm \ref{alg:airrecomp}.

\begin{remark}
Since $\Delta \tilde{\vtW}_{n,k}$ has unit variance, $p_k$ denotes the expected transmission power for each element of the model update, as follows
\begin{equation}
    \mathbb{E}[||\vtX_k||^2] = \mathbb{E}[\Delta \tilde{\vtW}_{n,k}^T\Delta \tilde{\vtW}_{n,k}]p_k = dp_k.
\end{equation}
\end{remark}
\begin{remark}
If the \ac{PS} only has access to the sum of the normalized model updates $\sum_{k=1}^K\Delta \tilde{\vtW}_{n,k}$, it is not possible to perfectly reconstruct $\Delta \vtW_n$ in general. However, the distortion caused by transmitting raw model updates with \ac{AirComp} is generally much greater than the distortion caused by the denormalization scheme we opt for here. In the specific case of $\sigma_1=\sigma_2=...=\sigma_K$, this scheme offers perfect reconstruction.
\end{remark}

\begin{algorithm}[t]
\caption{AirReComp}\label{alg:airrecomp}
\begin{algorithmic}
\ps
    \State initialize $\vtW_0$
\endps
\For{each round $n=0,1,..,N-1$}
    \ps
        \State broadcast $\vtW_n$ to devices
    \endps
    \device 
        \State $\vtW_{n,k}(E) \gets$ Equation \eqref{eq:local_training}
        \State $\Delta \vtW_{n,k} \gets (\vtW_n - \vtW_{n,k}(E))/\beta$
        \State $\Delta \tilde{\vtW}_{n,k} \gets$ Equation \eqref{eq:normalize}
        \State $\vtX_k \gets$ Equation \eqref{eq:tx_signal}
        \State transmit $\mu_{n,k}$ and $\sigma_{n,k}$ to server
    \enddevice
    \For{each $m=0,1,...,M-1$}
        \devices
            \State transmit $\vtX_k$ to server
        \enddevices
    \EndFor
    \ps
        \State $\overline{\vtY} \gets$ Equation \eqref{eq:update_estimate}
        \State $\Delta \hat{\vtW}_n \gets$ Equation \eqref{eq:global_ota_update}
        \State $\vtW_{n+1} \gets \vtW_n - \beta\Delta\hat{\vtW}_n$
    \endps
\EndFor
\end{algorithmic}
\end{algorithm}


\section{Power Control}
\label{sec:powercontrol}
In this section, we consider a power control problem to minimize the mean-squared estimation error defined as
\begin{equation}
\label{eq:mse}
    \mathbb{E}[(\Delta\vtW_n-\Delta \hat{\vtW}_n)^2],
\end{equation}
where the expectation is taken over the \ac{AWGN}, and $\Delta\vtW_n$ and $\Delta \hat{\vtW}_n$ are defined in \eqref{eq:global_update} and \eqref{eq:global_ota_update} respectively. For mathematical tractability, we assume that these gradient elements are \ac{IID} \cite{cao2020optimized, liu2020over, zang2020over}. Additionally, we ignore the effect of normalization by considering $\mu_{n,k}=0$ and $\sigma_{n,k}=1$. To perform the minimization, we seek the optimal choice of the transmission powers $\sqrt{p_k}$ and the post-transmission scalar $\sqrt{\eta}$. Since we consider static fading coefficients, the power control problem only has to be solved once per communication round. To model the limited transmission power of the devices, we consider a peak power constraint
\begin{equation}
    \mathbb{E}[|x_k^{(i)}|^2] = \mathbb{E}\left[|\Delta \tilde{w}_{n,k}^{(i)}\frac{\Bar{h}_k}{|h_k|}\sqrt{p_k}|^2\right] = p_k\leq P_{\text{max}}\ \forall k,
\end{equation}
where the expectation is taken over the model update $\Delta\tilde{w}_{n,k}^{(i)}$ and $P_{\text{max}}$ is the maximum transmission power. The minimization of \eqref{eq:mse} is formulated as
\begin{equation}
\label{eq:optproblem}
\begin{split}
    \begin{aligned}
    &\min_{\vtP, \eta} \\&\mathbb{E}\left[ \left(\sum_{k=1}^K \frac{|h_k|\Delta\vtW_{n,k}\sqrt{p_k}}{\sqrt{\eta}K} + \sum_{m=1}^M\frac{\operatorname{Re}(\vtZ_m)}{M\sqrt{\eta}K} - \sum_{k=1}^K\frac{\Delta\vtW_n}{K}\right)^2 \right]\\
    &\textrm{s.t.} \quad  p_k \leq P_{\text{max}},\ \forall k\in\{1<k<K\}.
    \end{aligned}
\end{split}
\end{equation}
Note that the number of transmissions $M$ is given as an input parameter and is selected before the power control problem is solved.
\begin{proposition}
\label{proposition:powercontrol}
Problem \eqref{eq:optproblem} has a unique solution. The optimal post-transmission scalar is given by the solution to the $K$ subproblems
\begin{equation}
\label{eq:eta_star}
    \eta^* = \min_k\Tilde{\eta}_k,
\end{equation}
where
\begin{equation}
\label{eq:etak}
    \Tilde{\eta}_k = \left(\frac{\sum_{j=1}^k|h_j|^2\overline{P}+\sigma_z^2/M}{\sum_{j=1}^k|h_j|\sqrt{\overline{P}}}\right)^2.
\end{equation}
The optimal transmission powers are
\begin{equation}
\label{eq:opt_pre}
    p_k^* = \operatorname{min}\left(\overline{P}, \frac{\eta^*}{|h_k|^2}\right).
\end{equation}
\end{proposition}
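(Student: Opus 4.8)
The plan is to split \eqref{eq:optproblem} into an inner minimization over the powers $\vtP$ for fixed $\eta$ and an outer minimization over $\eta$, and then to uncover a hidden convexity in the reparametrized variable $u=1/\sqrt{\eta}$. First I would simplify the objective: under the IID assumption the gradient elements are zero-mean, unit-variance and mutually independent, and are independent of the zero-mean noise, which is itself independent across the $M$ retransmissions, so all cross terms in the expected square vanish and the objective of \eqref{eq:optproblem} reduces to
\[
g(\vtP,\eta)\;=\;\frac{1}{K^2}\left[\sum_{k=1}^K\left(\frac{|h_k|\sqrt{p_k}}{\sqrt{\eta}}-1\right)^{\!2}+\frac{\sigma_z^2}{M\eta}\right].
\]
This is separable in the $p_k$, and each scalar map $p_k\mapsto(|h_k|\sqrt{p_k}/\sqrt{\eta}-1)^2$ is strictly decreasing on $[0,\eta/|h_k|^2]$ with minimum value $0$ at $\eta/|h_k|^2$; hence for fixed $\eta$ the unique constrained minimizer is $p_k^*=\min(\overline{P},\eta/|h_k|^2)$, which is precisely \eqref{eq:opt_pre} once the optimal $\eta^*$ is identified.

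Next I would perform the outer minimization. Substituting $p_k^*$ and sorting the channels so that $|h_1|\le\cdots\le|h_K|$, the set of ``saturated'' devices $\{k:\eta>|h_k|^2\overline{P}\}$ is the initial segment $\{1,\dots,\kappa\}$ whenever $\eta\in(|h_\kappa|^2\overline{P},\,|h_{\kappa+1}|^2\overline{P}]$ (with $|h_0|^2:=0$ and $|h_{K+1}|^2:=\infty$), so in the variable $u=1/\sqrt{\eta}$ the reduced objective is the piecewise-quadratic function $f$ that coincides with $q_\kappa(u):=\tfrac{\sigma_z^2}{M}u^2+\sum_{j=1}^{\kappa}\big(|h_j|\sqrt{\overline{P}}\,u-1\big)^2$ on piece $\kappa$. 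The identity $q_\kappa(u)-q_{\kappa-1}(u)=(|h_\kappa|\sqrt{\overline{P}}\,u-1)^2$ has a double root at the breakpoint $u_\kappa:=1/(|h_\kappa|\sqrt{\overline{P}})$, so $f$ is $C^1$ there; since each $q_\kappa$ is strictly convex, $f$ is strictly convex on $(0,\infty)$, and being coercive it has a unique minimizer $u^*$. This yields a unique $\eta^*=1/(u^*)^2$, and together with the uniqueness of $p_k^*$ given $\eta^*$ it proves that \eqref{eq:optproblem} has a unique solution.

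It remains to identify $\eta^*$ with $\min_k\tilde\eta_k$. Let $\kappa^*\in\{1,\dots,K\}$ be the piece containing $u^*$; since $f$ restricted to that piece is $q_{\kappa^*}$, the point $u^*$ is the vertex of $q_{\kappa^*}$, i.e.\ $u^*=B_{\kappa^*}/A_{\kappa^*}=1/\sqrt{\tilde\eta_{\kappa^*}}$ with $A_k,B_k$ the numerator and denominator in \eqref{eq:etak}, so $\eta^*=\tilde\eta_{\kappa^*}$. To show $\tilde\eta_{\kappa^*}\le\tilde\eta_\kappa$ for all $\kappa$ — equivalently $v_\kappa:=B_\kappa/A_\kappa\le u^*$ — I would use that the vertex $v_\kappa$ of $q_\kappa$ is a convex combination of the fixed points $0,u_1,\dots,u_\kappa$ (with weights $\sigma_z^2/M$ and $|h_j|^2\overline{P}$). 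If $\kappa>\kappa^*$, writing $q_\kappa=q_{\kappa^*}+\sum_{j=\kappa^*+1}^{\kappa}(|h_j|\sqrt{\overline{P}}\,u-1)^2$ exhibits $v_\kappa$ as a convex combination of $u^*$ and the points $u_j\le u_{\kappa^*+1}\le u^*$, hence $v_\kappa\le u^*$; if $\kappa<\kappa^*$, writing $q_{\kappa^*}=q_\kappa+\sum_{j=\kappa+1}^{\kappa^*}(|h_j|\sqrt{\overline{P}}\,u-1)^2$ exhibits $u^*$ as a convex combination of $v_\kappa$ and the points $u_j\ge u_{\kappa^*}>u^*$, which forces $v_\kappa<u^*$. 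Thus $u^*=\max_k v_k$, i.e.\ $\eta^*=\min_k\tilde\eta_k$, which is \eqref{eq:eta_star}.

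I expect this last step to be the crux: because the saturated set depends on $\eta$ while the optimal $\eta$ depends on the saturated set, one cannot simply set a derivative to zero. The convexity of $f$ in $u=1/\sqrt{\eta}$ — a byproduct of the $C^1$-gluing identity — together with the ``vertex is a weighted average of fixed points'' observation is what breaks this circularity and reduces the search to the finite family $\tilde\eta_1,\dots,\tilde\eta_K$; everything else is routine.
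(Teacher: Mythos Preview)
The paper does not actually prove Proposition~\ref{proposition:powercontrol}: it states that ``the proof follows the proof in \cite{cao2020optimized} and is omitted from this paper.'' So there is no in-paper argument to compare against; your proposal supplies a complete, self-contained proof where the paper gives none.

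Your argument is correct. The decomposition into an inner power optimization (yielding the threshold form \eqref{eq:opt_pre}) followed by an outer optimization over $\eta$ is the same high-level structure as in \cite{cao2020optimized}. Where your write-up is arguably cleaner than the original reference is in the outer step: the reparametrization $u=1/\sqrt{\eta}$, the $C^1$-gluing identity $q_\kappa-q_{\kappa-1}=(|h_\kappa|\sqrt{\overline{P}}\,u-1)^2$, and the ``vertex is a convex combination of vertices'' observation give a short and transparent reason why the search collapses to the finite list $\tilde\eta_1,\dots,\tilde\eta_K$ and why the minimum over $k$ is attained precisely at the active piece. The reference establishes the same threshold structure but argues the outer optimization more directly by checking, for each candidate threshold index, whether the stationary point of the corresponding quadratic lies in the admissible interval; your convexity argument subsumes that case analysis.

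Two minor points you may want to tighten. First, $f$ is not coercive on $(0,\infty)$ in the usual sense: as $u\to0^+$ one has $f(u)\to K$, a finite limit. What you actually use (and what suffices) is $f'(0^+)=-2\sum_j|h_j|\sqrt{\overline{P}}<0$ together with $f(u)\to\infty$ as $u\to\infty$, which forces a unique interior critical point. Second, your sentence ``let $\kappa^*\in\{1,\dots,K\}$ be the piece containing $u^*$'' silently uses that $u^*$ cannot fall in piece $0$; this follows because $q_0'(u)=2\sigma_z^2u/M>0$ for $u>0$, but it is worth saying explicitly since it is exactly what guarantees that at least one device transmits at full power (a fact the paper later invokes in the proof of Corollary~1).
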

The proof of Proposition \ref{proposition:powercontrol} follows the proof in \cite{cao2020optimized} and is omitted from this paper.
\begin{remark}
\label{rem:eta}
From \eqref{eq:etak}, we see that the post-transmission scalar $\sqrt{\eta}$ assumes a lower value when more retransmissions are used. As we increase the number of retransmissions, the \ac{SNR} increases and consequently, the noise-induced error reduces. Therefore, the fading error becomes dominant and the optimal post-transmission scalar $\eta^*$ is lowered to improve it.
\end{remark}


\section{Convergence Analysis}
\label{sec:convergence_analysis}
In this section, we analyze the learning performance of Algorithm \ref{alg:airrecomp}. As before, we assume that the updates are zero-mean and unit variance, i.e. $\mu_{n,k}=0$, $\sigma_{n,k}=1$. Additionally, we assume that there is only one epoch of local training in each communication round ($E=1$). The performance is measured as the gap between the \ac{FL} loss gap at iteration $n$, defined as
\begin{equation}
\label{eq:excessloss}
    \mathbb{E}[F(\vtW_n)]-F(\vtW^*).
\end{equation}
We derive two upper bounds on this loss gap, one for strongly-convex functions and one for convex functions. For both bounds, we use the following well-known lemma \cite{sery2020analog, nesterov2003introductory}.
\begin{lemma}
\label{lemma:convex_smooth}
Let $F(\vtX) : \mathbb{R}^d\rightarrow\mathbb{R}$ be a convex function with L-Lipschitz gradient. Then, the following inequality holds:
\begin{equation}
\label{eq:convex_smooth}
    F(\vtY)-F(\vtX)-\nabla F(\vtX)^T(\vtY-\vtX) \leq \frac{L}{2}||\vtX-\vtY||^2.
\end{equation}
\end{lemma}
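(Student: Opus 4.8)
The plan is to reduce the $d$-dimensional inequality to a one-dimensional integral along the line segment joining $\vtX$ and $\vtY$, and then control the integrand with the Lipschitz property of $\nabla F$. First I would introduce the scalar function $g(t) = F(\vtX + t(\vtY - \vtX))$ on $t \in [0,1]$; since $F$ is differentiable, $g$ is differentiable with $g'(t) = \nabla F(\vtX + t(\vtY - \vtX))^T(\vtY - \vtX)$. The fundamental theorem of calculus then gives $F(\vtY) - F(\vtX) = g(1) - g(0) = \int_0^1 \nabla F(\vtX + t(\vtY - \vtX))^T(\vtY - \vtX)\,dt$.

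Writing $\nabla F(\vtX)^T(\vtY - \vtX) = \int_0^1 \nabla F(\vtX)^T(\vtY - \vtX)\,dt$ and subtracting, the left-hand side of \eqref{eq:convex_smooth} equals $\int_0^1 \bigl(\nabla F(\vtX + t(\vtY - \vtX)) - \nabla F(\vtX)\bigr)^T(\vtY - \vtX)\,dt$. I would then bound the integrand using the Cauchy--Schwarz inequality together with the $L$-Lipschitz gradient assumption, namely $\|\nabla F(\vtX + t(\vtY - \vtX)) - \nabla F(\vtX)\| \le L\,\|t(\vtY - \vtX)\| = Lt\,\|\vtY - \vtX\|$. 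This yields the upper bound $\int_0^1 Lt\,\|\vtY - \vtX\|^2\,dt = \frac{L}{2}\|\vtY - \vtX\|^2$, which is exactly \eqref{eq:convex_smooth}.

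There is no genuine obstacle: this is the classical ``descent lemma,'' and every step --- the integral representation, Cauchy--Schwarz, the Lipschitz estimate --- is routine. I would only remark that convexity of $F$ plays no role in this particular inequality; only smoothness of the gradient is used. The hypothesis is kept as stated because the subsequent bounds in this section invoke convexity as well, and citing the combined statement from \cite{nesterov2003introductory, sery2020analog} is convenient.
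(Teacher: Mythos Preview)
Your proof is correct and is in fact the standard textbook argument for the descent lemma. The paper does not supply its own proof of this statement at all; it simply cites it as a well-known result from \cite{sery2020analog, nesterov2003introductory}, so there is nothing further to compare.
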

Additionally, we make an assumption on the similarity of the local model updates $\Delta\vtW_{n,k}$ and the global model update $\Delta\vtW_n$ \cite{cao2021optimized, zhang2021gradient}.
\begin{assumption}
\label{as:variance_bound}
The local model updates $\Delta\vtW_{n,k}$ are assumed to be independent and unbiased estimates of the global model update $\Delta\vtW_{n}$.
\begin{equation}
    \label{eq:unbiased}
    \mathbb{E}[\Delta\vtW_{n,k}] = \Delta\vtW_{n},\ \forall k\in\{1,2,\dots,K\}.
\end{equation}
The local model updates and the global gradient update are in general different. The difference has coordinate bounded variance:
\begin{equation}
    \label{eq:coordinate_bounded}
    \mathbb{E}[(\Delta w_{n,k}^{(i)}-\Delta w_{n}^{(i)})^2] \leq (\sigma^{(i)})^2,
\end{equation}
\end{assumption}
where $\Delta w_{n,k}^{(i)}$ is the $i$-th element of $\Delta \vtW_{n,k}$, and $(\sigma^{(i)})^2$ are the element-wise upper bounds. We will also use $\vtSigma\in\mathbb{R}^d$ to denote the vector of variance bounds.

\subsection{Strongly-convex loss}
In this subsection, we assume that the \ac{FL} loss is $\mu$-strongly convex. For such a loss, we use the following lemma \cite{sery2020analog, nesterov2003introductory}:
\begin{lemma}
\label{lemma:stronglyconvex_smooth}
Let $F(\vtX) : \mathbb{R}^d\rightarrow\mathbb{R}$ be a $\mu$-strongly convex function with $L$-Lipschitz gradient. Then, the following inequality holds:
\begin{equation}
\label{eq:strongly_smooth}
\begin{split}
    \left(\nabla F(\vtX)-\nabla F(\vtY)\right)^T\left(\vtX - \vtY\right) \geq \\\frac{\mu L}{\mu+L}||\vtX-\vtY||^2 + \frac{1}{\mu+L}||\nabla F(\vtX) - \nabla F(\vtY)||^2.
\end{split}
\end{equation}
\end{lemma}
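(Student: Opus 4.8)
The plan is to reduce the claim to the standard co-coercivity (firm nonexpansiveness) inequality for convex, Lipschitz-smooth functions, via an affine shift. Introduce the auxiliary function $\phi(\vtX) = F(\vtX) - \tfrac{\mu}{2}\|\vtX\|^2$. Since $F$ is $\mu$-strongly convex, $\phi$ is convex; since $\nabla F$ is $L$-Lipschitz, $\nabla\phi(\vtX) = \nabla F(\vtX) - \mu\vtX$ is $(L-\mu)$-Lipschitz. If $L=\mu$, then $\nabla\phi$ is constant, so $F$ is quadratic with Hessian $\mu\mtI$ and \eqref{eq:strongly_smooth} holds with equality; assume $L>\mu$ from here on.

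Next I would apply to $\phi$ the co-coercivity bound
\[
  \bigl(\nabla\phi(\vtX) - \nabla\phi(\vtY)\bigr)^T(\vtX - \vtY) \geq \frac{1}{L-\mu}\bigl\|\nabla\phi(\vtX) - \nabla\phi(\vtY)\bigr\|^2 .
\]
To keep the argument self-contained I would re-derive it from Lemma \ref{lemma:convex_smooth}: for fixed $\vtY$, the map $\vtX \mapsto \phi(\vtX) - \nabla\phi(\vtY)^T\vtX$ is convex, $(L-\mu)$-smooth, and has $\vtY$ as a global minimizer; applying Lemma \ref{lemma:convex_smooth} to this shifted map at the pair $\bigl(\vtX,\, \vtX - \tfrac{1}{L-\mu}(\nabla\phi(\vtX)-\nabla\phi(\vtY))\bigr)$ and using minimality at $\vtY$ gives $\phi(\vtX) - \phi(\vtY) \geq \nabla\phi(\vtY)^T(\vtX-\vtY) + \tfrac{1}{2(L-\mu)}\|\nabla\phi(\vtX)-\nabla\phi(\vtY)\|^2$; adding this to the same inequality with $\vtX$ and $\vtY$ interchanged yields the displayed bound.

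Finally I would unwind the shift. Writing $\bfg := \nabla F(\vtX) - \nabla F(\vtY)$ and $\bfd := \vtX - \vtY$, one has $\nabla\phi(\vtX) - \nabla\phi(\vtY) = \bfg - \mu\bfd$, so the co-coercivity bound reads $\bfg^T\bfd - \mu\|\bfd\|^2 \geq \tfrac{1}{L-\mu}\bigl(\|\bfg\|^2 - 2\mu\,\bfg^T\bfd + \mu^2\|\bfd\|^2\bigr)$. Multiplying through by $L-\mu>0$ and collecting terms produces $(L+\mu)\,\bfg^T\bfd \geq \|\bfg\|^2 + \mu L\|\bfd\|^2$, which is exactly \eqref{eq:strongly_smooth} after dividing by $L+\mu$. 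The only point needing care — the modest obstacle here — is tracking the constants through the shift: the smoothness modulus of $\phi$ must be taken as $L-\mu$, not $L$, and one must verify that the quadratic-in-$\mu$ terms recombine as $\mu^2 + \mu(L-\mu) = \mu L$. Everything else is routine algebra; since the inequality is classical (e.g.\ Nesterov, Theorem~2.1.12), the final write-up could alternatively just cite it with a one-line mention of the shift argument.
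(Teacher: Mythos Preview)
Your argument is correct and is precisely the classical proof (Nesterov, \emph{Introductory Lectures on Convex Optimization}, Theorem~2.1.12): shift by $\tfrac{\mu}{2}\|\cdot\|^2$, apply co-coercivity to the resulting convex $(L-\mu)$-smooth function, and unwind. The paper does not actually supply a proof of this lemma---it simply cites it as a known result from \cite{sery2020analog, nesterov2003introductory}---so your write-up is exactly the argument being deferred to, and there is nothing further to compare.
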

Before we are ready to state the upper bound, we must also assert that the fixed step size $\beta$ has been selected to be sufficiently small for convergence. For AirReComp, the norm of the gradient is dependent on the power with which the signal is received, as such, the step size must be upper bounded as a function of the power control.
\begin{assumption}
\label{as:learning_rate}
Let the fixed step size $\beta$ be:
\begin{equation}
\label{eq:learning_rate}
    \beta < \operatorname{min}\left(\frac{K\sqrt{\eta}(\mu+L)}{2\mu L\sum_{k=1}^K\sqrt{p_k}|h_k|}, \frac{2\sqrt{\eta}}{\mu+L}\frac{\sum_{k=1}^K\sqrt{p_k}|h_k|}{\sum_{k=1}^Kp_k|h_k|^2}\right).
\end{equation}
\end{assumption}
We are now ready to give the first upper bound on the \ac{FL} loss function \eqref{eq:excessloss} given the update described in \eqref{eq:global_ota_update} and \eqref{eq:update_rule_ota}.
\begin{proposition}
\label{proposition:stronglyconvex}
Let
\begin{equation}
\label{eq:c1_constant}
    c_1 \coloneqq \frac{\sum_{k=1}^K\sqrt{p_k}|h_k|}{\sqrt{\eta}},
\end{equation}
\begin{equation}
\label{eq:convergence_constant}
    c_2 \coloneqq 1-\frac{2\beta}{K}\frac{\mu L}{\mu+L}c_1,
\end{equation}
and
\begin{equation}
\label{eq:c3_constant}
    c_3 \coloneqq ||\vtSigma||^2\frac{\sum_{k=1}^Kp_k|h_k|^2}{K\eta}+\frac{d\sigma_z^2}{MK^2\eta}.
\end{equation}
Then the \ac{FL} loss is upper bounded by
\begin{equation}
\label{eq:bound}
\begin{split}
    \mathbb{E}[F(\vtW_n)] - F(\vtW^*) \leq \frac{L}{2}c_2^n\mathbb{E}[r_0^2]+\frac{\beta^2L}{2(1-c_2)}c_3,
\end{split}
\end{equation}
where $r_0=||\vtW_0-\vtW^*||$ is the distance between the initial weight vector and the optimal one, $\vtSigma$ is a vector of the coordinate bounded variances from \eqref{eq:coordinate_bounded}, and $d$ is the number of model parameters.
\end{proposition}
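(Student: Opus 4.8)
The plan is to control the squared distance to the optimum, $r_n^2 \coloneqq \Norm{\vtW_n-\vtW^*}^2$, establish a contractive recursion of the form $\mathbb{E}[r_{n+1}^2]\le c_2\,\mathbb{E}[r_n^2]+\beta^2 c_3$, unroll it as a geometric series, and finally pass to the loss-gap bound through Lemma \ref{lemma:convex_smooth}. Since $E=1$ and the updates are taken zero-mean and unit-variance ($\mu_{n,k}=0$, $\sigma_{n,k}=1$), device $k$ contributes a single-step local update $\Delta\vtW_{n,k}$, and the server step \eqref{eq:global_ota_update} and \eqref{eq:update_rule_ota} becomes $\vtW_{n+1}=\vtW_n-\beta\,\Delta\hat\vtW_n$ with
\begin{equation}
\Delta\hat\vtW_n = \sum_{k=1}^K a_k\,\Delta\vtW_{n,k} + \frac{1}{M\sqrt{\eta}K}\sum_{m=1}^M\operatorname{Re}(\vtZ_m),
\end{equation}
where $a_k\coloneqq |h_k|\sqrt{p_k}/(\sqrt{\eta}K)$, so that $\sum_{k=1}^K a_k = c_1/K$ with $c_1$ from \eqref{eq:c1_constant}. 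By Assumption \ref{as:variance_bound}, conditioned on $\vtW_n$ the device updates are independent, have mean $\nabla F(\vtW_n)$, and deviate from it with coordinate-wise variance at most $(\sigma^{(i)})^2$; hence $\mathbb{E}[\Delta\hat\vtW_n\mid\vtW_n]=(c_1/K)\nabla F(\vtW_n)$.

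First I would expand $r_{n+1}^2=\Norm{\vtW_n-\vtW^*-\beta\Delta\hat\vtW_n}^2$ and take the conditional expectation, producing $r_n^2$, a cross term $-\tfrac{2\beta c_1}{K}\nabla F(\vtW_n)^T(\vtW_n-\vtW^*)$, and $\beta^2\,\mathbb{E}[\Norm{\Delta\hat\vtW_n}^2\mid\vtW_n]$. The cross term I would lower-bound with Lemma \ref{lemma:stronglyconvex_smooth} at $\vtX=\vtW_n$, $\vtY=\vtW^*$, using $\nabla F(\vtW^*)=0$, which produces a $\tfrac{\mu L}{\mu+L}r_n^2$ piece and a $\tfrac{1}{\mu+L}\Norm{\nabla F(\vtW_n)}^2$ piece. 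For the second moment I would use the bias--variance decomposition $\mathbb{E}[\Norm{\Delta\hat\vtW_n}^2\mid\vtW_n]=(c_1/K)^2\Norm{\nabla F(\vtW_n)}^2+\mathrm{Var}$, where the residual variance splits, by independence and zero mean, into a sum over devices that \eqref{eq:coordinate_bounded} bounds by $\sum_k a_k^2\Norm{\vtSigma}^2$, plus a channel-noise term obtained by averaging the $M$ independent AWGN copies; collecting these reproduces exactly the constant $c_3$ of \eqref{eq:c3_constant}.

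Substituting both estimates back yields
\begin{equation}
\mathbb{E}[r_{n+1}^2\mid\vtW_n]\le c_2\, r_n^2+\Bigl(\tfrac{\beta^2 c_1^2}{K^2}-\tfrac{2\beta c_1}{K(\mu+L)}\Bigr)\Norm{\nabla F(\vtW_n)}^2+\beta^2 c_3,
\end{equation}
where the coefficient $1-\tfrac{2\beta c_1}{K}\tfrac{\mu L}{\mu+L}$ of $r_n^2$ is precisely $c_2$ from \eqref{eq:convergence_constant}. The decisive step is that the coefficient of $\Norm{\nabla F(\vtW_n)}^2$ is non-positive, which is equivalent to $\beta c_1\le 2K/(\mu+L)$: this follows from the second clause of Assumption \ref{as:learning_rate} together with the Cauchy--Schwarz inequality $\bigl(\sum_k\sqrt{p_k}|h_k|\bigr)^2\le K\sum_k p_k|h_k|^2$. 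Discarding that term, taking total expectations, and iterating the recursion gives $\mathbb{E}[r_n^2]\le c_2^n\,\mathbb{E}[r_0^2]+\beta^2 c_3/(1-c_2)$ by summing the geometric series, which converges because the first clause of Assumption \ref{as:learning_rate} forces $0<c_2<1$. Finally, Lemma \ref{lemma:convex_smooth} at $\vtX=\vtW^*$, $\vtY=\vtW_n$ with $\nabla F(\vtW^*)=0$ gives $F(\vtW_n)-F(\vtW^*)\le\tfrac{L}{2}r_n^2$; taking expectations and substituting the recursion bound yields \eqref{eq:bound}.

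I expect the main obstacle to be the second-moment computation and checking that it collapses onto $c_3$ exactly: one must track the three independent sources of randomness in $\Delta\hat\vtW_n$ --- the per-device update fluctuations weighted by the $a_k$, the $M$-fold averaged channel noise, and their interaction with the convexity/smoothness inequalities --- and confirm that the peak-power-constrained solution of Proposition \ref{proposition:powercontrol} leaves every factor in the precise $c_1$/$c_3$ form. A secondary delicate point is verifying that the two clauses of Assumption \ref{as:learning_rate} are exactly what the argument consumes: the first to keep $c_2\in(0,1)$ so the geometric sum is finite, and the second (via Cauchy--Schwarz) to dispose of the gradient-norm term, so that no additional restriction on $\beta$ is implicitly needed.
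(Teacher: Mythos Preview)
Your overall strategy---control $r_n^2$, expand, use Lemma~\ref{lemma:stronglyconvex_smooth} on the cross term, bound the second moment, obtain a contractive recursion, unroll geometrically, and finish with Lemma~\ref{lemma:convex_smooth}---is exactly the paper's route.

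The one point where you diverge is in bounding $\mathbb{E}[\Norm{\Delta\hat\vtW_n}^2]$. The paper does \emph{not} use a bias--variance split; instead it applies Cauchy--Schwarz directly to the device sum, $\Norm{\sum_k \sqrt{p_k}|h_k|\Delta\vtW_{n,k}}^2 \le K\sum_k p_k|h_k|^2\Norm{\Delta\vtW_{n,k}}^2$, and then invokes \eqref{eq:coordinate_bounded}. This produces the gradient-norm coefficient $\tfrac{\beta^2}{K\eta}\sum_k p_k|h_k|^2$ (not your $\beta^2 c_1^2/K^2$) and the variance piece $\tfrac{1}{K\eta}\sum_k p_k|h_k|^2\,\Norm{\vtSigma}^2$, which is precisely the first summand of $c_3$. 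Your bias--variance route yields $\sum_k a_k^2\Norm{\vtSigma}^2 = \tfrac{1}{K^2\eta}\sum_k p_k|h_k|^2\,\Norm{\vtSigma}^2$, a factor of $K$ \emph{smaller} than the $c_3$ term---so your claim that the pieces ``reproduce exactly $c_3$'' is not right; you get a strictly tighter constant, and one more relaxation ($\tilde c_3\le c_3$) is needed to land on the stated bound. Correspondingly, the paper's gradient-norm coefficient is non-positive \emph{directly} by the second clause of Assumption~\ref{as:learning_rate} (no Cauchy--Schwarz needed there), whereas your sharper coefficient needs the extra Cauchy--Schwarz step you describe. None of this is a gap---your argument is valid and in fact slightly sharper---but be aware that what you wrote does not match $c_3$ on the nose.
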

\begin{proof}
The proof is provided in Appendix \ref{appendix:stronglyconvex}.
\end{proof}
We refer to the first term of the RHS of \eqref{eq:bound} as the \textit{diminishing term}, because it approaches zero if $n \rightarrow \infty$. Along the same line, we refer to the other terms as the \textit{post-convergence terms} because they remain even if $n \rightarrow \infty$. The implications of Proposition \ref{proposition:stronglyconvex} are given in Section \ref{sec:convergence_discussion}.

\subsection{Convex loss}
In this subsection, we relax the assumption on strong convexity and develop a bound for Lipschitz smooth and convex loss functions. For this bound, we need a different guarantee on the fixed step size than for the strongly convex case.
\begin{assumption}
\label{as:learning_rate2}
The fixed step size $\beta$ is selected to satisfy:
\begin{equation}
\label{eq:learning_rate2}
    0 < \beta < \frac{\sqrt{\eta}}{L}\frac{\sum_{k=1}^K\sqrt{p_k}|h_k|}{\sum_{k=1}^Kp_k|h_k|^2}.
\end{equation}
\end{assumption}
\begin{proposition}
\label{proposition:convex}
Consider Assumption \ref{as:variance_bound} and \ref{as:learning_rate2}. Then the \ac{FL} loss is upper bounded by
\begin{equation}
\label{eq:bound2}
\begin{split}
    \mathbb{E}[F(\vtW_n)] - F(\vtW^*) \leq \frac{K}{2n\beta c_1}\mathbb{E}[r_0^2]+\frac{\beta}{2}(\frac{K}{c_1} + L\beta)c_3,
\end{split}
\end{equation}
where $c_1$ and $c_3$ are defined in \eqref{eq:c1_constant} and \eqref{eq:c3_constant} respectively.
\end{proposition}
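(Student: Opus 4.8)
The plan is to build a one-step recursion on the squared distance $a_n \coloneqq \mathbb{E}[\|\vtW_n - \vtW^*\|^2]$ and couple it with a smoothness descent inequality so that the gradient-norm terms cancel. First I would unroll the update for the present regime: with $E=1$, $\mu_{n,k}=0$, $\sigma_{n,k}=1$, the local update reduces to $\Delta\vtW_{n,k} = \nabla F_k(\vtW_n)$ in the noiseless case, and Assumption \ref{as:variance_bound} models the actual $\Delta\vtW_{n,k}$ as independent unbiased estimates of $\Delta\vtW_n = \nabla F(\vtW_n)$. Substituting \eqref{eq:update_estimate} into \eqref{eq:global_ota_update} (the denormalization is trivial here) gives $\Delta\hat\vtW_n = \sum_{k=1}^K \frac{|h_k|\sqrt{p_k}}{\sqrt{\eta}K}\Delta\vtW_{n,k} + \operatorname{Re}\big(\sum_{m=1}^M \frac{\vtZ_m}{M\sqrt{\eta}K}\big)$, since the signal part is real. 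Conditioning on $\vtW_n$ and using unbiasedness, the conditional mean is $\frac{c_1}{K}\nabla F(\vtW_n)$ with $c_1$ from \eqref{eq:c1_constant}, and the conditional second moment decomposes as $\frac{c_1^2}{K^2}\|\nabla F(\vtW_n)\|^2$ plus a variance term, which I would bound by $c_3$ of \eqref{eq:c3_constant} using independence of the $\Delta\vtW_{n,k}$ across $k$ with the coordinate bound \eqref{eq:coordinate_bounded}, plus the independence and variance of the AWGN across the $M$ transmissions.

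Next I would expand $a_{n+1} = a_n - 2\beta\,\mathbb{E}[\langle\Delta\hat\vtW_n, \vtW_n-\vtW^*\rangle] + \beta^2\mathbb{E}[\|\Delta\hat\vtW_n\|^2]$. Plugging in the conditional mean and using plain convexity, $\langle\nabla F(\vtW_n),\vtW_n-\vtW^*\rangle \geq F(\vtW_n)-F(\vtW^*)$, the cross term becomes $-\frac{2\beta c_1}{K}\mathbb{E}[F(\vtW_n)-F(\vtW^*)]$, while the quadratic term is $\beta^2\big(\frac{c_1^2}{K^2}\mathbb{E}[\|\nabla F(\vtW_n)\|^2] + c_3\big)$. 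The crucial choice is to keep $\mathbb{E}[\|\nabla F(\vtW_n)\|^2]$ rather than crudely absorbing it into the loss gap via $\|\nabla F\|^2 \leq 2L(F-F^*)$ --- doing the latter would inflate the $1/n$ constant --- and instead to eliminate it with a second inequality.

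That inequality is the smoothness descent: Lemma \ref{lemma:convex_smooth} with $\vtX=\vtW_n$, $\vtY=\vtW_{n+1}$, after taking expectations, gives $\mathbb{E}[F(\vtW_{n+1})] \leq \mathbb{E}[F(\vtW_n)] - \frac{\beta c_1}{K}\big(1-\frac{L\beta c_1}{2K}\big)\mathbb{E}[\|\nabla F(\vtW_n)\|^2] + \frac{L\beta^2}{2}c_3$. Here Assumption \ref{as:learning_rate2}, combined with Cauchy--Schwarz in the form $\big(\sum_k\sqrt{p_k}|h_k|\big)^2 \leq K\sum_k p_k|h_k|^2$, forces $\frac{L\beta c_1}{K}<1$, so the descent coefficient exceeds $\frac{\beta c_1}{2K}$ and can be inverted to express $\mathbb{E}[\|\nabla F(\vtW_n)\|^2]$ in terms of $\mathbb{E}[F(\vtW_n)]-\mathbb{E}[F(\vtW_{n+1})]$ plus a multiple of $\beta^2 c_3$. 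Substituting this back into the distance recursion, the function-value increments telescope alongside the distances, leaving an inequality of the form $\frac{2\beta c_1}{K}\mathbb{E}[F(\vtW_{n+1})-F(\vtW^*)] \leq a_n - a_{n+1} + \beta^2 c_3\big(1+\frac{\beta c_1 L}{K}\big)$.

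Finally I would sum this over the rounds, telescope the $a_n$'s against $a_0 = \mathbb{E}[r_0^2]$, divide by $n$, and invoke convexity once more --- Jensen's inequality applied to the running average of the iterates lower-bounds the averaged loss --- to convert the Cesàro average of loss gaps into a bound on a single representative iterate, yielding \eqref{eq:bound2} with the stated constants $\frac{K}{2n\beta c_1}$ and $\frac{\beta}{2}\big(\frac{K}{c_1}+L\beta\big)c_3$. I expect the main obstacle to be the coupling step: arranging the distance recursion and the smoothness descent so that the $\|\nabla F(\vtW_n)\|^2$ contributions cancel exactly and the residual constants collapse precisely to those in \eqref{eq:bound2}, with Assumption \ref{as:learning_rate2} being exactly what is needed (through Cauchy--Schwarz) to keep the descent coefficient bounded away from zero. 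A secondary, cosmetic point is that the argument naturally produces an averaged iterate, whereas the statement is written for $\vtW_n$.
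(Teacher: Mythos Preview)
Your approach is correct and lands on exactly the stated constants, but it is organised differently from the paper's proof. The paper starts from the smoothness descent (Lemma~\ref{lemma:convex_smooth}) to obtain
\[
\mathbb{E}[F(\vtW_{n+1})]\le \mathbb{E}[F(\vtW_n)]-\tfrac{\beta c_1}{2K}\mathbb{E}[\|\nabla F(\vtW_n)\|^2]+\tfrac{L\beta^2}{2}c_3,
\]
then replaces $\mathbb{E}[F(\vtW_n)]$ by $F(\vtW^*)+\mathbb{E}[\nabla F(\vtW_n)^{\mathrm T}(\vtW_n-\vtW^*)]$ via convexity and completes the square to introduce the auxiliary distance $\tilde r_{n+1}^2=\|\vtW_n-\vtW^*-\tfrac{\beta c_1}{K}\nabla F(\vtW_n)\|^2$. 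The telescoping sum $\sum(r_{i-1}^2-\tilde r_i^2)$ is then controlled by separately proving $\mathbb{E}[r_{i+1}^2-\tilde r_{i+1}^2]\le\beta^2 c_3$, which is where the actual update noise re-enters. Your route is the more standard SGD-style argument: you work directly with the distance recursion $a_{n+1}=a_n-2\beta\mathbb{E}\langle\Delta\hat\vtW_n,\vtW_n-\vtW^*\rangle+\beta^2\mathbb{E}\|\Delta\hat\vtW_n\|^2$, use the mean--variance decomposition (which is in fact tighter than the paper's Cauchy--Schwarz bound \eqref{eq:expected_square}, since $c_1^2/K^2\le \sum_k p_k|h_k|^2/(K\eta)$), and eliminate $\mathbb{E}\|\nabla F(\vtW_n)\|^2$ by feeding the descent inequality back in. This avoids the auxiliary $\tilde r_{n+1}$ entirely and makes the cancellation mechanism transparent; the paper's detour through $\tilde r$ is essentially an implicit version of the same substitution.

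On the final step, you are right that the argument naturally yields the average $\tfrac{1}{n}\sum_i\mathbb{E}[F(\vtW_i)-F(\vtW^*)]$. The paper closes this gap by asserting $\mathbb{E}[F(\vtW_n)]\le\mathbb{E}[F(\vtW_i)]$ for all $i\le n$, which is not actually justified there (the descent inequality has the additive $\tfrac{L\beta^2}{2}c_3$ term, so strict monotonicity need not hold). Your Jensen step gives the bound for the averaged iterate $\bar\vtW_n$; alternatively one trivially gets it for $\min_{i\le n}\mathbb{E}[F(\vtW_i)]$. Either way the ``cosmetic point'' you flag is shared by both proofs.
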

\begin{proof}
The proof is provided in Appendix \ref{appendix:convex}.
\end{proof}
Just as for Proposition \ref{proposition:stronglyconvex}, we refer to the first term of the RHS of \eqref{eq:bound} as the \textit{diminishing term}, and the other terms as the \textit{post-convergence terms}. The implications of Proposition \ref{proposition:convex} are given in Section \ref{sec:convergence_discussion}.

\subsection{Discussion on Proposition \ref{proposition:stronglyconvex} and \ref{proposition:convex}}
\label{sec:convergence_discussion}
In this section, we discuss Proposition \ref{proposition:stronglyconvex} and \ref{proposition:convex}. Since the propositions are upper bounds, we are discussing the worst-case properties of the \ac{FL} loss using AirReComp. We are specifically interested in the impact of the number of retransmissions.

\subsubsection{Convergence Rate}
As a shorthand, let $x_n$ equal the diminishing term of \eqref{eq:bound} or \eqref{eq:bound2} depending on which bound we are considering. Then, suppose the following holds
\begin{equation}
    \lim_{n\rightarrow\infty}\frac{|x_{n+1}-L|}{|x_n-L|} = \alpha.
\end{equation}
If $\alpha = 1$, we say that the worst-case convergence rate of AirReComp is sublinear, and if $0 < \alpha < 1$, we say that the worst-case convergence rate is linear.

For the strongly convex bound, we substitute $x_n$ with the right-hand-side of \eqref{eq:bound} and $L$ with the post-convergence terms. Basic algebra then tells us that $\alpha=c_2$. Since $c_2 < 1$, AirReComp has linear convergence rate for strongly-convex loss functions. Additionally, the rate of convergence is increasing with decreasing $c_2$. Since $c_2$ is decreasing in $c_1$, we can also say that the rate of convergence is increasing with $c_1$.

For the convex bound, the same procedure yields $\alpha=1$. Therefore, AirReComp has sublinear convergence rate for convex loss functions. Since $\alpha$ is constant with respect to $M$, this indicates that the asymptotic rate of convergence is independent of the number of retransmissions. However, if we consider a finite $n$, the diminishing term of \eqref{eq:bound2} is decreasing in $c_1$. Therefore, we say that the non-asymptotic convergence rate of AirReComp with convex loss functions is increasing in $c_1$. With these clarifications, we can state the following corollary.

\begin{corollary}
The worst-case convergence rate of AirReComp is increasing in $M$.
\end{corollary}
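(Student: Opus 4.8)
The plan is to reduce the corollary entirely to a single monotonicity fact, namely that the constant $c_1$ of \eqref{eq:c1_constant} is non-decreasing in the number of retransmissions $M$, and then to read off the statement from the rate characterizations already established in Section~\ref{sec:convergence_discussion}. So the first step is to show $\eta^*$ is non-increasing in $M$: this is essentially Remark~\ref{rem:eta}, but made precise by inspecting \eqref{eq:etak}, where the only $M$-dependence is the term $\sigma_z^2/M$ in the numerator of each candidate $\tilde{\eta}_k$; hence every $\tilde{\eta}_k$ is non-increasing in $M$, and taking the pointwise minimum over $k$ in \eqref{eq:eta_star} preserves this, so $\eta^*(M)$ is non-increasing (strictly decreasing when $\sigma_z^2>0$).

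The second step — the only one requiring actual computation — is to show that $c_1$, viewed as a function of $\eta^*$ through the optimal powers \eqref{eq:opt_pre}, is non-increasing in $\eta^*$. Substituting $p_k^*=\min(\overline{P},\eta^*/|h_k|^2)$ into \eqref{eq:c1_constant} and simplifying each term separately yields
\begin{equation}
c_1 \;=\; \sum_{k=1}^K \frac{|h_k|\sqrt{p_k^*}}{\sqrt{\eta^*}} \;=\; \sum_{k=1}^K \min\!\left(\frac{\sqrt{\overline{P}}\,|h_k|}{\sqrt{\eta^*}},\;1\right),
\end{equation}
because a channel-inverting device ($p_k^*=\eta^*/|h_k|^2$) contributes exactly $1$, while a peak-power device ($p_k^*=\overline{P}$) contributes $\sqrt{\overline{P}}|h_k|/\sqrt{\eta^*}\le 1$. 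Each summand is non-increasing in $\eta^*$, hence so is $c_1$. Composing with the first step, $c_1(M)$ is non-decreasing in $M$ (strictly increasing whenever some device is peak-power limited and $\sigma_z^2>0$).

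The final step is the translation back to convergence rates, exactly along the lines of Section~\ref{sec:convergence_discussion}. For $\mu$-strongly-convex $F$, the asymptotic contraction factor is $\alpha=c_2=1-\tfrac{2\beta}{K}\tfrac{\mu L}{\mu+L}c_1$ from \eqref{eq:convergence_constant}, which is decreasing in $c_1$ and therefore non-increasing in $M$, so the linear rate improves; for convex $F$, $\alpha=1$ is $M$-independent, but the diminishing term of \eqref{eq:bound2} is proportional to $1/c_1$ and hence non-increasing in $M$, so the non-asymptotic rate improves. In either case the worst-case rate is non-decreasing in $M$, which is the claim.

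I expect the main obstacle to be the bookkeeping in the second step: one must be careful that decreasing $\eta^*$ can move a device across the boundary between the channel-inversion and peak-power regimes, but since the resulting expression is a sum of pointwise minima of monotone functions this is harmless. A secondary caveat worth stating explicitly is that the admissible step sizes in Assumptions~\ref{as:learning_rate}--\ref{as:learning_rate2} themselves depend on $M$ through $c_1$; the corollary should therefore be read with $\beta$ held fixed at a value admissible over the whole range of $M$ under comparison, so that comparing the resulting $c_2$'s (respectively the $1/c_1$ prefactors) is legitimate.
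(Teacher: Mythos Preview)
Your proposal is correct and follows essentially the same approach as the paper: both reduce the claim to showing $c_1$ is increasing in $M$ via the monotonicity of $\eta^*$ and the two-case analysis of channel-inverting versus peak-power devices, with your pointwise-minimum formula being a compact repackaging of the paper's explicit case split. The only notable difference is that the paper invokes \cite{cao2020optimized} to guarantee at least one device is always peak-power limited, thereby obtaining \emph{strict} increase unconditionally, whereas you (correctly) state strictness as conditional on that event and on $\sigma_z^2>0$.
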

\begin{proof}
As clarified previously, the convergence rate is increasing in $c_1$ for both bounds. Therefore, we need to establish that $c_1$ is increasing in $M$. From \eqref{eq:c1_constant}, we see that $c_1$ depends on the power control policy via the post-transmission scalar $\eta$ and the transmission powers $p_k$. We already know that $\eta$ is decreasing in $M$ from Remark \ref{rem:eta}. For the transmission powers, we see in \eqref{eq:opt_pre} that the transmission powers take one of two values:
\begin{enumerate}
    \item The device inverts its channel with $p_k = \eta/|h_k|^2$. These devices contribute $1$ to the sum in $c_1$ regardless of $M$;
    \item The device transmits with maximum power $p_k = \overline{P}$. These devices contribute $\sqrt{\overline{P}}|h_k|/\sqrt{\eta}$ to the sum in $c_1$. Since $\eta$ is decreasing in $M$, these devices contribute more to $c_1$ for higher $M$.
\end{enumerate}
With these two statements we can readily conclude that $c_1$ is strictly increasing in $M$, given the condition that at least one device transmits with maximum power. In \cite{cao2020optimized}, they prove that this condition always holds with optimal power control. Thus, we can conclude that $c_1$ is strictly increasing in $M$.
\end{proof}

\subsubsection{Final error}
Since both bounds have post-convergence terms, the algorithm does not converge to a local optimum. Instead, the algorithm converges to a region around the optimum, where the expected remaining loss gap is given by the post-convergence terms. There are two reasons why AirReComp does not converge exactly. Firstly, the channel noise (characterized by $\sigma_z$) causes unavoidable errors which prevents exact convergence. Secondly, the difference between local and global model updates (characterized by $\vtSigma$) causes a global model update that differs from what is achieved in centralized gradient descent. This result aligns with what was found in \cite{cao2021optimized}.

Unfortunately, the post-convergence terms are not strictly decreasing in $M$\footnote{We made a mistake in \cite{hellstrom2021over} claiming the opposite.}. Depending on the parameters, the diminishing terms can both increase and decrease with $M$. As a rule of thumb, if $||\vtSigma||^2 >> \sigma_z^2$, the final error increases with $M$. However, due to resource constraints, training must generally be stopped long before convergence is reached. As such, the diminishing term tends to be dominant \cite{wang2018edge} and thus the convergence rate tends to be more relevant than the final error. It is worth to note that the gap could possibly be made to approach zero as $n\rightarrow\infty$ by considering learning rate scheduling, but in this paper we restrict ourselves to static $\beta$.


\section{Heuristic for selecting $M$}
\label{sec:heuristic}
As we have demonstrated in Section \ref{sec:convergence_analysis}, the inclusion of retransmissions in Over-the-Air \ac{FL} speeds up the worst-case convergence speed of \ac{FL} training. Simultaneously, the uplink communication cost is proportional to $M$. Therefore, $M$ acts as a tradeoff parameter between communication cost and \ac{FL} convergence speed. As such, the selection of $M$ is an important factor in model training on resource-constrained devices. 

To characterize this tradeoff more precisely, we consider a cost budget $\overline{C}$ for the whole training process from communication round $n=0$ to the final round $n=N$. For the purposes of this paper, the cost budget $\overline{C}$ corresponds to an abstract cost, but could be replaced with, e.g., energy or time depending on the application. As introduced in Section \ref{sec:system_model}, each communication round is associated with a computational cost $C_t$ and an uplink communication cost of $C_u$, which leads to the following constraint
\begin{equation}
\label{eq:cost_constraint}
    (C_t+MC_u)N \leq \overline{C}.
\end{equation}
Given that the system satisfies constraint \eqref{eq:cost_constraint}, we want to find the $M$ that minimizes the \ac{FL} loss after the final communication round $N$. We can formulate this as the following integer programming problem
\begin{equation}
\label{eq:optproblem2}
\begin{split}
    \begin{aligned}
    &\min_{M,N} \mathbb{E}[F(\vtW_N)]\\
    &\textrm{s.t.} \quad (C_t+MC_u)N \leq \overline{C}\\
    &M,N \in \mathbb{Z}^+.
    \end{aligned}
\end{split}
\end{equation}
Since $\mathbb{E}[F(\vtW_N)]$ is decreasing in N, the highest possible integer $N$ that does not break constraint \eqref{eq:cost_constraint} can be selected without loss of optimality. Thus we can transform \eqref{eq:optproblem2} to
\begin{equation}
\label{eq:optproblem3}
\begin{split}
    \begin{aligned}
    &\min_{M} \mathbb{E}[F(\vtW_N)]\\
    &\textrm{s.t.} \quad  N = \left\lfloor\frac{\overline{C}}{C_t+MC_u}\right\rfloor\\
    &M \in \mathbb{Z}^+,
    \end{aligned}
\end{split}
\end{equation}
which is easier than \eqref{eq:optproblem2}. However, in order to solve this problem, we require an exact expression of $\mathbb{E}[F(\vtW_N)]$. In the current \ac{ML} literature, no such expression is known. Instead, we propose to replace the objective function of \eqref{eq:optproblem3} with a proxy function that resembles $\mathbb{E}[F(\vtW_N)]$.

\subsection{Full bound proxy}
Two possible proxy functions for $\mathbb{E}[F(\vtW_N)]$ are the upper bounds on $\mathbb{E}[F(\vtW_N)] - F(\vtW^*)$ in \eqref{eq:bound} (Proposition \ref{proposition:stronglyconvex}) and \eqref{eq:bound2} (Proposition \ref{proposition:convex}). There are two things to note about this choice. 
\begin{itemize} 
    \item Initially, it might seem wrong to use $\mathbb{E}[F(\vtW_N)] - F(\vtW^*)$ as a proxy function for $\mathbb{E}[F(\vtW_N)]$, because the desired objective function is the expected loss, while our bounds are on the difference between the expected loss and the optimal loss. However, the optimal loss $F(\vtW^*)$ is constant with respect to $M$, so including it in the objective function has no impact on the optimal decision variable $M^*$.
    \item The upper bounds represent a worst-case analysis of the actual loss and we have no proof of tightness for these bounds. Therefore, this proxy function can at best be considered a heuristic for finding the optimal $M$ and the efficacy of the choice should be evaluated numerically, which we do in Section \ref{sec:numerical_results}.
\end{itemize}

\subsection{Diminishing term proxy}
The full bound proxy described above is the best approximation of $F(\vtW_N)$ we have available. However, to use it in optimization problem \eqref{eq:optproblem3}, rich knowledge about the dataset and loss function is required. Specifically, the Lipschitz smoothness constant $L$ and the strong convexity parameter $\mu$ must be known to calculate the full bound. The calculation of these constants requires access to the global dataset, whereas the devices are only carrying smaller local datasets. Additionally, for more complicated models such as \acp{DNN}, the calculation of $L$ is difficult and one can at best hope to estimate it \cite{fazlyab2019efficient}. As such, it is desirable to have a proxy function that can be calculated without the knowledge of $\mu$ and $L$.

In the strongly convex bound from Proposition \ref{proposition:stronglyconvex} \eqref{eq:bound}, the $\mu$ and $L$ constants are ubiquitous, which makes the bound difficult to calculate in practice. However, in \eqref{eq:bound2} from Proposition \ref{proposition:convex} there is of course no $\mu$ and the Lipschitz smoothness constant $L$ only appears in the post-convergence term. As such, the diminishing term of \eqref{eq:bound2} can be calculated using quite limited information. Additionally, we expect the diminishing term to be dominant since training over resource-constrained networks is generally stopped long before the algorithm has fully converged \cite{wang2018edge}. With this choice of proxy function, the optimization problem for finding $M$ becomes
\begin{equation}
\label{eq:optproblem4}
\begin{split}
    \begin{aligned}
    &\min_{M} \frac{K\sqrt{\eta}}{2N\beta \sum_{k=1}^K\sqrt{p_k}|h_k|}\mathbb{E}[r_0^2]\\
    &\textrm{s.t.} \quad  N = \left\lfloor\frac{\overline{C}}{C_t+MC_u}\right\rfloor\\
    &M \in \mathbb{Z}^+.
    \end{aligned}
\end{split}
\end{equation}
Here, every variable except $\mathbb{E}[r_0^2]$ is reasonable to know in practice. However, since $\mathbb{E}[r_0^2]$ is constant with respect to $M$ it can be dropped without changing the optimal decision variable $M^*$. We then have the problem
\begin{equation}
\label{eq:optproblem5}
\begin{split}
    \begin{aligned}
    &\min_{M} \frac{K\sqrt{\eta}}{2N\beta \sum_{k=1}^K\sqrt{p_k}|h_k|}\\
    &\textrm{s.t.} \quad  N = \left\lfloor\frac{\overline{C}}{C_t+MC_u}\right\rfloor\\
    &M \in \mathbb{Z}^+.
    \end{aligned}
\end{split}
\end{equation}
This final optimization problem is practical and computationally cheap to solve, however it is based on rough approximations of the desired problem \eqref{eq:optproblem2}. Therefore, the efficacy of the heuristic is not clear without experimental results, which we provide in Section \ref{sec:numerical_results}.


\section{Numerical Results}
\label{sec:numerical_results}
The performance of the proposed AirReComp system is now evaluated in terms of the model update estimation error, federated learning loss, and classification accuracy. Specifically, we have three goals with this section:
\begin{itemize}
    \item To demonstrate the need for retransmission-aware power control, by comparing our proposed solution with the state-of-the-art single transmission schemes proposed in \cite{cao2020optimized, liu2020over};
    \item To demonstrate that introducing retransmissions is also beneficial for non-convex loss functions. Note that our analytical results assumed convex loss functions.
    \item To demonstrate the viability of selecting $M$ using our proposed heuristic.
\end{itemize}

\begin{figure*}[t]
    \begin{tikzpicture}
        \node[inner sep=0pt] (russell) at (0,0)
            {\includegraphics[width=\textwidth]{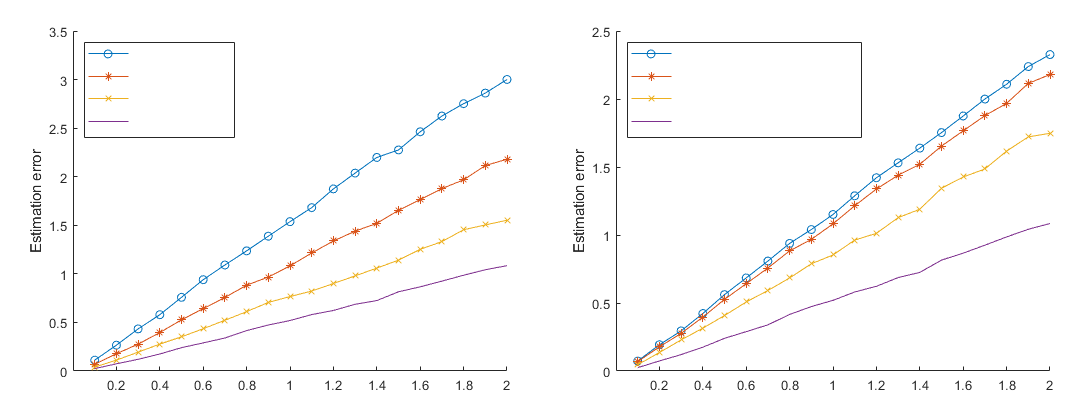}};
        \node[] at (-4.1,-3.3) {$\sigma_z$};
        \node[] at (-6.3,2.70) {$M=1$};
        \node[] at (-6.3,2.30) {$M=2$};
        \node[] at (-6.3,1.90) {$M=4$};
        \node[] at (-6.3,1.50) {$M=8$};
        \node[] at (5.3,-3.3) {$\sigma_z$};
        \node[] at (4.00,2.70) {Rtx-unaware $M=2$};
        \node[] at (3.82,2.30) {Rtx-aware $M=2$};
        \node[] at (4.00,1.90) {Rtx-unaware $M=8$};
        \node[] at (3.82,1.50) {Rtx-aware $M=8$};
    \end{tikzpicture}
    \caption{Estimation error evaluation of AirReComp. We consider $K=20$ devices and evaluate the squared estimation error. Left: The estimation error of a single transmission ($M=1$) is compared to using retransmissions ($M>1$). Note that even though \ac{SNR} scales linearly with the number of transmissions, the estimation error is not reduced as drastically. Right: The estimation error of optimal retransmission-aware power control is compared to a retransmission-unaware baseline. The results demonstrate the importance of designing the power control scheme with retransmissions in mind.}
    \label{fig:power_control}
\end{figure*}
\begin{figure}[t]
    \begin{tikzpicture}
        \node[inner sep=0pt] (russell) at (0,0)
            {\includegraphics[width=8.5cm]{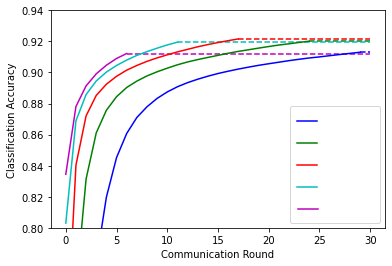}};
        \node[] at (3.3,0.3) {$M=1$};
        \node[] at (3.3,-0.18) {$M=2$};
        \node[] at (3.3,-0.66) {$M=4$};
        \node[] at (3.3,-1.14) {$M=8$};
        \node[] at (3.35,-1.62) {$M=16$};
    \end{tikzpicture}
    \caption{Classification accuracy of MNIST hand-written digit recognition problem. We consider $K=10$ devices and train fully-connected \acp{DNN} over a multiple access channel with fading. We compare five systems with different numbers of uplink retransmissions, where $M=1$ corresponds to using zero retransmissions. The results demonstrate an increased rate of convergence as the number of retransmissions increase.}
    \label{fig:accuracy}
\end{figure}
\begin{figure}[t]
    \begin{tikzpicture}
        \node[inner sep=0pt] (russell) at (0,0)
            {\includegraphics[width=8.5cm]{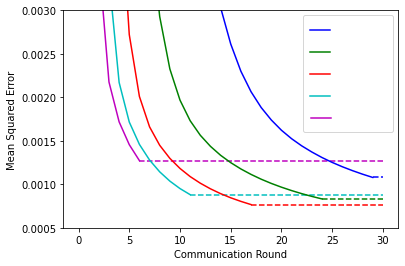}};
        \node[] at (3.3,2.25) {$M=1$};
        \node[] at (3.3,1.75) {$M=2$};
        \node[] at (3.3,1.28) {$M=4$};
        \node[] at (3.3,0.80) {$M=8$};
        \node[] at (3.39,0.35) {$M=16$};
    \end{tikzpicture}
    \caption{Mean-squared error loss of "Digital Demo Stockholm" water monitoring problem. We train fully-connected \acp{DNN} with MSE loss to predict the conductivity level of the water based on other measurements (temperature, salinity, etc...). The plot shows the normalized MSE between the test dataset conductivity and the predicted conductivity as a function of the communication rounds. This result demonstrates that AirReComp does well for non-convex regression in addition to classification.}
    \label{fig:accuracy2}
\end{figure}

\subsection{Power Control}
\label{subsec:power_control}
In this subsection, we wish to evaluate the impact of our proposed power control scheme on the estimation error of the global update $\Delta\tilde{w}_k$. Specifically, we compare the \ac{MSE} of the $\Delta\tilde{w}_k$ for different choices of $M$, and compare the AirReComp power control scheme to the baseline solutions of \cite{cao2020optimized, liu2020over}. For this, we consider the transmission of randomly generated scalars instead of running a complete \ac{FL} simulation setup. For this simulation, we consider $K=20$ users and varying noise powers $\sigma_z^2$. To simulate the network, we generate channel coefficients according to unit Rayleigh fading $h_k \sim \mathcal{N}(0,1/2)+j\mathcal{N}(0,1/2)$ and additive noise components as $z \sim \mathcal{N}(0, \sigma_z^2)$. The transmitted scalars $\Delta\tilde{w}_k$ are generated according to the unit normal distribution, which matches the assumption in Section \ref{sec:powercontrol}. The \ac{PS} estimate of the arithmetic mean $\Delta \hat{\tilde{w}}$ is generated according to \eqref{eq:comac_estimate}, where the transmission powers $p_k$ and the post-transmission scalar $\sqrt{\eta}$ are selected according to Proposition \ref{proposition:powercontrol}. Upon calculating the estimate, it is compared to the true arithmetic mean of $\Delta\tilde{w}_k$ according to
\begin{equation}
    \text{MSE} = (\frac{1}{K}\sum_{k=1}^K\Delta\tilde{w}_k - \Delta \hat{\tilde{w}})^2.
\end{equation}
This process is repeated 20,000 times for each value of $\sigma_z^2$. The resulting \acp{MSE} are averaged to form the plot in Fig. \ref{fig:power_control}. In the left plot of the figure, the estimation error using different number of transmissions $M$ are illustrated. The plot demonstrates that the mean squared estimation error is approximately linear with the variance of the additive noise, regardless of $M$. In the right plot of Fig. \ref{fig:power_control}, we compare the AirReComp power control scheme to that proposed in \cite{cao2020optimized, liu2020over}. The baseline is optimal if there are no retransmissions, but the numerical results demonstrate that our scheme has a significantly lower estimation error when $M > 1$. The gap between AirReComp and the baseline is also increasing with $M$, which demonstrates the importance of designing the power control scheme with retransmissions in mind. From the left plot, it is clear that the reduction in estimation error is worse than proportional to $M$. Instead, the system using $M=8$ achieves approximately three times lower estimation error that the baseline of $M=1$. Compared to using a forward error-correcting code, this result is significantly worse. However, since such codes are not compatible with analog communication, retransmissions are a good first step towards enabling a communication-estimation trade-off.

\subsection{Federated Learning Convergence}
\label{sec:numerical_fl}
\begin{figure}[t]
    \begin{tikzpicture}
        \node[inner sep=0pt] (russell) at (0,0)
            {\includegraphics[width=8.5cm]{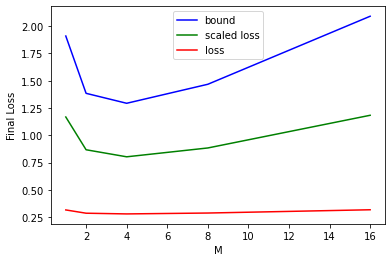}};
    \end{tikzpicture}
    \caption{Final loss after consuming the cost budget for the same systems as in Fig. \ref{fig:accuracy}. The empirical loss is compared to the diminishing term of our upper bound in \eqref{eq:optproblem3}. The heuristic matches the bound quite closely and selects the optimal $M$. }
    \label{fig:bound1}
\end{figure}
\begin{figure}[t]
    \begin{tikzpicture}
        \node[inner sep=0pt] (russell) at (0,0)
            {\includegraphics[width=8.5cm]{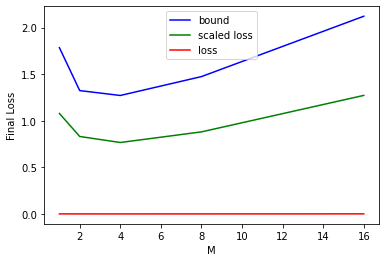}};
    \end{tikzpicture}
    \caption{Final loss after consuming the cost budget for a regression system for predicting the conductivity of freshwater. The heuristic selects $M=4$ which is not the optimal $M^*=8$. However, the difference in empirical performance between $M=4$ and $M=8$ is negligible, so $M=4$ is not a poor choice.}
    \label{fig:bound2}
\end{figure}
\begin{figure}[t]
    \begin{tikzpicture}
        \node[inner sep=0pt] (russell) at (0,0)
            {\includegraphics[width=8.5cm]{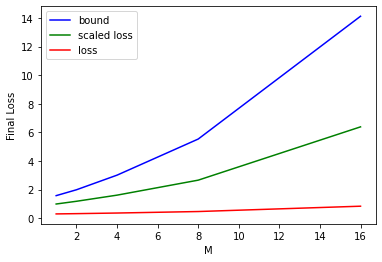}};
    \end{tikzpicture}
    \caption{Final loss after consuming the cost budget for a system where $M=1$ is the optimal choice. The heuristic is able to identify $M^*=1$ and can therefore avoid hurting the accuracy by introducing unwanted retransmissions.}
    \label{fig:bound3}
\end{figure}
\begin{figure}[t]
    \begin{tikzpicture}
        \node[inner sep=0pt] (russell) at (0,0)
            {\includegraphics[width=8cm]{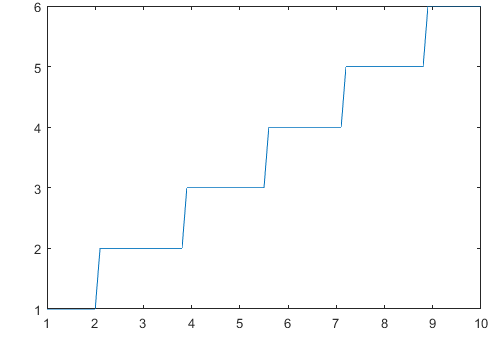}};
        \node[] at (-3.8,0.2) {$M^*$};
        \node[] at (0.5,-2.7) {$\sigma_z$};
    \end{tikzpicture}
    \caption{Optimal $M$ as predicted by our heuristic for different $\sigma_z^2$. We see that the optimal number of retransmissions are increasing with the additive noise power.}
    \label{fig:heuristic}
\end{figure}
In this subsection, we have two goals: To verify that the convergence rate is increasing in $M$ for non-convex loss functions and to show that $M>1$ can achieve improved performance over $M=1$ without exceeding the cost budget. For the \ac{FL} simulation, the network setup is identical to Subsection \ref{subsec:power_control}, except that the additive noise variance is fixed to $\sigma_z^2 = \sqrt{2K}$. The \ac{ML} task is multi-label classification on the MNIST dataset \cite{lecun1998gradient} with $|\mathcal{D}_k|=6000$ training samples per user device. The classifier is a \ac{DNN} which consists of an input layer of 784 nodes, a hidden layer with 100 nodes, and an output layer of 10 nodes. The network is trained with a static learning rate of $\beta=0.05$ with no dropout. We run 2 epochs ($E=2$) per communication round, with a budget of $\overline{C}=150$, training cost $C_t=4$ and communication cost $C_u=1$. The whole training process is repeated 50 times for each value of $M$ considered, these results are then averaged to get the plot in Fig. \ref{fig:accuracy}.

The results clearly demonstrate that the convergence rate is improved as additional retransmissions are introduced, even for non-convex loss functions. Notice that all systems stop training when the cost budget $\overline{C}$ is consumed, e.g. the system with $M=1$ stops after 50 communication rounds and the system with $M=2$ stops after 25 communication rounds. As such, we can see that AirReComp can outperform the baseline of using no retransmissions ($M=1$) in terms of classification accuracy, without incurring additional costs. Specifically, the system with $M=1$ achieved an average classification accuracy of $91.3\%$, while the best system with $M=4$ achieved an average classification accuracy of $92.1\%$.

In addition to training a classifier on the MNIST dataset, we also train a regression model to predict the conductivity of freshwater based on its temperature, percentage of dissolved oxygen, pH value, oxidation-reduction potential, and salinity. We have 30,000 datapoints in the dataset, of which we use 25,000 randomly selected data points for training and the remaining 5,000 as test data. The dataset used is the iWater dataset from "Digital Demo Stockholm" which is collected by sensors installed in lake Mälaren, Sweden \cite{iwater, iwater2}. For this simulation, we use exactly the same setup as the previous, except that the \ac{DNN} is reshaped to fit the new dataset. There is now an input layer of 4 nodes and an output layer of 1 node. Once again, we repeat the whole training process 50 times for each value of $M$ and the results are averaged to generate the plot in Fig. \ref{fig:accuracy2}. 
Similarly to the MNIST simulation, we see that the convergence rate is improved with increasing $M$, but now for a regression problem. Once again, the system stops after the cost budget $\overline{C}$ is consumed. The optimal system (lowest \ac{MSE}) with $M=4$ has a normalized \ac{MSE} of $0.0008$, whereas the baseline of $M=1$ achieved a normalized \ac{MSE} of $0.001$, so $M=4$ achieves approximately 20\% lower \ac{MSE} than the baseline.

\subsection{Selection of $M$}
In Section \ref{sec:heuristic}, we developed a heuristic for finding the $M$ which would yield the lowest loss without exceeding a cost budget $\overline{C}$. To verify the efficacy of using this heuristic, we have compared the final losses of the simulations in Section \ref{sec:numerical_fl} to the objective function of \eqref{eq:optproblem3}, we refer to this objective function as the \textit{bound heuristic}. In all plots of this section, we scale and translate the loss functions to give a better visual experience. Since we are only interested in the identification of the minimum value, this does not affect the result.

First for the MNIST dataset, we see in Fig. \ref{fig:bound1} that the shape of the bound heuristic resembles that of the empirical result. In this scenario, the heuristic would have selected $M=4$, which also ended up being the optimal choice for training. To generate this plot, we multiplied the empirical loss by 10 and then subtracted a constant of 2. The bound heuristic line was left without scaling or translation.

For the iWater dataset, we have a similar comparison in Fig. \ref{fig:bound2}. Both the bound and empirical loss function has a similar shape to Fig. \ref{fig:bound1}, despite being for a completely different \ac{ML} problem. However, the empirical loss function is of very different scale. In this figure, we multiplied the empirical loss by 1000, and did not perform any translation. The reason for this difference is because the regression problem is much easier than the classification problem and we therefore end up with lower losses in general. Since the shapes of the plots are the same, the bound heuristic correctly selects $M=4$, which is the optimal number of retransmissions. This result illustrates that the bound heuristic works well for both classification and regression problems.

There are cases where one should not employ retransmissions, because the cost of doing so outweighs the benefit of reduced estimation errors. To illustrate this, we ran an MNIST simulation with the same setup as the previous MNIST simulation but reduced the cost of computation to $C_t=0$, so that the the communication cost is relatively higher. Since the computation cost is lower, we also reduced the total budget to $\overline{C}=30$ so that the $M=1$ system gets to train for the same number of rounds (30 communication rounds). The result is illustrated in Fig. \ref{fig:bound3}. Here, we can see that the heuristic is able to correctly identify that retransmissions should not be used, since it would harm the final classification accuracy. Like the previous MNIST simulation, the empirical loss function is multiplied by 10 and then subtracted by 2 to generate the scaled loss.

Finally, we consider the relationship between the optimal number of transmissions $M^*$ and the variance of the additive noise $\sigma_z^2$. To illustrate this, we calculate $M^*$ using \eqref{eq:optproblem5} for all $\sigma_z = \{0,0.1,0.2,...,9.9,10\}$ to generate the plot in Fig. \ref{fig:heuristic}. In this simulation, we used the same parameters as in the simulation of Fig. \ref{fig:accuracy} and as such we see $M^*=4$ for $\sigma_z=\sqrt{2K}\approx6.3$. This result demonstrates that the optimal number of retransmissions increases with the power of the noise, which also seems intuitively correct.


\section{Conclusion}
\label{sec:conclusion}
In this paper, we propose retransmissions for Over-the-Air \ac{FL}, in a system we call AirReComp. Arguably, this is the first work to enable a trade-off between communication resources and convergence speed for Over-the-Air \ac{FL}. To improve the estimation error of AirReComp, we find a closed-form solution for optimal power control in the uplink. This power control solution shows that the number of retransmissions must be known by the transmitters to realize the MMSE estimator. We also prove two upper bounds on the \ac{FL} loss for the AirReComp system, both for strongly-convex and convex loss functions. These bounds show that the convergence rate of \ac{FL} is strictly increasing in the number of retransmissions. Finally, we provide a heuristic for selecting the optimal number of retransmissions based on the upper bounds. All of these results are verified numerically for non-convex loss functions by training \acp{DNN} with AirReComp, both in classification and regression tasks. The simulations also demonstrate that AirReComp can significantly outperform single uplink transmissions. There are conditions for which retransmissions are not desirable, we demonstrate that the heuristic can detect such conditions before training starts, which avoids introducing harmful retransmissions.

There are still interesting open problems on the reduction of estimation errors for Over-the-Air \ac{FL}, including:
\begin{itemize}
    \item \textbf{The consideration of fast-fading channels for AirReComp.} In this work, we consider a network with static channels, which restricts the applicability of the results to application areas with static sensors. In a fast-fading scenario, the problem changes substantially, especially the power control problem described in Section \ref{sec:powercontrol}.
    \item \textbf{Other methods of controlling the estimation error for Over-the-Air \ac{FL}.} For instance, one could consider the possibility of a distributed channel code. With analog communication, this appears to be inapplicable, but with the ideas of one-bit digital Over-the-Air Computation, there might be possibilities to explore in this direction \cite{zhu2020one}.
    \item \textbf{Tradeoff between transmission power and retransmissions.} Instead of focusing on adding retransmissions to improve the estimation error, one could consider changing the transmission power. Similar to the tradeoff explored in this work, there is a tradeoff between transmission power and convergence rate. Especially for low-powered IoT-devices, it would be interesting to analyze how much the transmission power could be reduced without significantly harming the \ac{FL} performance.
\end{itemize}


\appendices
\section{Proof of Proposition \ref{proposition:stronglyconvex}}
In Section \ref{sec:convergence_analysis} we assumed that the model updates $\Delta\vtW_n$ are already of zero mean and with unit variance. Therefore, no normalization is required. This can equivalently be expressed by saying that $\mu_{n,k}=0$ and $\sigma_{n,k}=1$. As a consequence of this assumption, we have
\begin{equation}
\label{eq:global_update_proof}
    \Delta \hat{\vtW}_n = \operatorname{Re}(\overline{\vtY}) = \sum_{k=1}^K \frac{|h_k|\Delta \vtW_{n,k}\sqrt{p_k}}{\sqrt{\eta}K} + \sum_{m=1}^M\frac{\operatorname{Re}(\vtZ_m)}{M\sqrt{\eta}K}.
\end{equation}

\label{appendix:stronglyconvex}
We start the proof by expressing the distance between the optimal global model $\vtW^*$ and the current global model $\vtW_n$ at communication round $n$ as
\begin{equation}
\label{eq:model_distance}
    r_n^2 \coloneqq ||\vtW_n-\vtW^*||^2.
\end{equation}
This distance can be related to the \ac{FL} loss function via Lemma \ref{lemma:convex_smooth} and Lemma \ref{lemma:stronglyconvex_smooth}. The plan for the proof is to utilize this relationship to form the upper bound. But before we get to that stage, we need to introduce the impact of AirReComp on the model update. To do so, we use \eqref{eq:update_rule_ota} with \eqref{eq:model_distance} to express
\begin{equation}
\label{eq:model_distance2}
\begin{split}
    r_{n+1}^2 \coloneqq ||\vtW_n-\vtW^*-\beta\Delta\hat{\vtW}_n||^2& \\= r_n^2 - 2\beta\Delta\hat{\vtW}_n^T(\vtW_n-\vtW^*) + \beta^2||\Delta\hat{\vtW}_n||^2&,
\end{split}
\end{equation}
where $\Delta\hat{\vtW}_n$ is the model update from \eqref{eq:global_update_proof}. Next, we take the expectation of \eqref{eq:model_distance2} with respect to $\Delta\vtW_{n,k}$ and $\vtZ_m$. To do that, we first need to determine $\mathbb{E}\left[\Delta\hat{\vtW}_n\right]$ and $\mathbb{E}\left[||\Delta\hat{\vtW}_n||^2\right]$. Beginning with $\mathbb{E}\left[\Delta\hat{\vtW}_n\right]$, we use \eqref{eq:global_update_proof} to get
\begin{equation}
\label{eq:expected_value_1}
    \mathbb{E}\left[\Delta\hat{\vtW}_n\right] = \mathbb{E}\left[\sum_{k=1}^K \frac{|h_k|\Delta \vtW_{n,k}\sqrt{p_k}}{\sqrt{\eta}K} + \sum_{m=1}^M\frac{\operatorname{Re}(\vtZ_m)}{M\sqrt{\eta}K}\right].
\end{equation}
Because of Assumption \ref{as:variance_bound} and since the noise $\vtZ_m$ is zero-mean, this simplifies to:
\begin{equation}
    \mathbb{E}\left[\Delta\hat{\vtW}_n\right] = \mathbb{E}[\Delta\vtW_n]\sum_{k=1}^K \frac{|h_k|\sqrt{p_k}}{\sqrt{\eta}K}.
\end{equation}
Since we assume there is only one epoch ($E=1$), the model update is the gradient of the global loss function, i.e., $\mathbb{E}[\Delta\vtW_n]=\mathbb{E}[\nabla F(\vtW_n)]$, which gives us
\begin{equation}
\label{eq:expected_value}
    \mathbb{E}\left[\Delta\hat{\vtW}_n\right] = \mathbb{E}[\nabla F(\vtW_n)]\sum_{k=1}^K \frac{|h_k|\sqrt{p_k}}{\sqrt{\eta}K}.
\end{equation}
Next, we find $\mathbb{E}\left[||\Delta\hat{\vtW}_n||^2\right]$, once again using \eqref{eq:global_update_proof}
\begin{equation}
\begin{split}
    &\mathbb{E}\left[||\Delta\hat{\vtW}_n||^2\right] = \\&\frac{1}{K^2\eta}\mathbb{E}\left[||\sum_{k=1}^K |h_k|\Delta \vtW_{n,k}\sqrt{p_k} + \sum_{m=1}^M\frac{\operatorname{Re}(\vtZ_m)}{M}||^2\right].
\end{split}
\end{equation}
The cross-multiplication term disappears because the noise is zero mean and independent from the model update
\begin{equation}
\label{eq:w_squared}
\begin{split}
    \mathbb{E}\left[||\Delta\hat{\vtW}_n||^2\right] = \frac{1}{K^2\eta}\mathbb{E}\left[||\sum_{k=1}^K \sqrt{p_k}|h_k|\Delta \vtW_{n,k}||^2\right] \\+\frac{1}{M^2K^2\eta} \mathbb{E}\left[||\sum_{m=1}^M\operatorname{Re}(\vtZ_m)||^2\right].
\end{split}
\end{equation}
The first term of \eqref{eq:w_squared} can be upper-bounded by the Cauchy-Schwartz inequality as follows
\begin{equation}
\label{eq:cauchy_schwartz}
    ||\sum_{k=1}^K \sqrt{p_k}|h_k|\Delta \vtW_{n,k}||^2 \leq K\sum_{i=1}^d\left(\sum_{k=1}^Kp_k|h_k|^2(\Delta w_{n,k}^{(i)})^2\right).
\end{equation}
Then, we apply our assumption on the local model updates from \eqref{eq:coordinate_bounded} to \eqref{eq:cauchy_schwartz} to get
\begin{equation}
\label{eq:cauchy_schwartz2}
    ||\sum_{k=1}^K \sqrt{p_k}|h_k|\Delta \vtW_{n,k}||^2 \leq K\sum_{k=1}^Kp_k|h_k|^2(||\Delta \vtW_n||^2 + ||\vtSigma||^2).
\end{equation}
Combining \eqref{eq:w_squared} and \eqref{eq:cauchy_schwartz2} gives us
\begin{equation}
\begin{split}
    &\mathbb{E}\left[||\Delta\hat{\vtW}_n||^2\right] \leq \\ &\frac{1}{K\eta}\mathbb{E}\left[\sum_{k=1}^K p_k|h_k|^2(||\Delta \vtW_n||^2 + ||\vtSigma||^2)\right] +\frac{d\sigma_z^2}{MK^2\eta},
\end{split}
\end{equation}
and since we assume $E=1$, we get
\begin{equation}
\label{eq:expected_square}
\begin{split}
    \mathbb{E}\left[||\Delta\hat{\vtW}_n||^2\right] \leq \\\frac{1}{K\eta}\sum_{k=1}^K p_k|h_k|^2(||\vtSigma||^2+\mathbb{E}\left[||\nabla F(\vtW_n)||^2\right]) +\frac{d\sigma_z^2}{MK^2\eta}.
\end{split}
\end{equation}
With $\mathbb{E}\left[\Delta\hat{\vtW}_n\right]$ and $\mathbb{E}\left[||\Delta\hat{\vtW}_n||^2\right]$ evaluated in \eqref{eq:expected_value} and \eqref{eq:expected_square}, we go back to the model distance. Taking the expectation on both sides of \eqref{eq:model_distance2} yields
\begin{equation}
\label{eq:model_distance3}
\begin{split}
    \mathbb{E}[r_{n+1}^2] \leq \mathbb{E}[r_n^2] - \sum_{k=1}^K\frac{2\beta\sqrt{p_k}|h_k|\mathbb{E}[\nabla F(\vtW_n)^T(\vtW_n - \vtW^*)]}{K\sqrt{\eta}}\\
    +\frac{\beta^2}{K\eta}\sum_{k=1}^Kp_k|h_k|^2(||\vtSigma||^2 + \mathbb{E}[||\nabla F(\vtW_n)||^2]) + \frac{d\sigma_z^2\beta^2}{MK^2\eta}.
\end{split}
\end{equation}
Now we are ready to introduce the \ac{FL} loss by utilizing strong convexity and Lipschitz smoothness. We do this by rewriting Lemma \ref{lemma:stronglyconvex_smooth} to
\begin{equation}
\label{eq:stronglyconvex_smooth2}
\begin{split}
    &\mathbb{E}\left[\nabla F(\vtW_n)^T(\vtW_n-\vtW^*)\right] \geq \\&\frac{\mu L}{\mu+L}\mathbb{E}[r_n^2]+\frac{1}{\mu+L}\mathbb{E}\left[||\nabla F(\vtW_n)||^2\right],
\end{split}
\end{equation}
where we have utilized $\nabla F(\vtW^*) = 0$ for the final term on the RHS. Combining \eqref{eq:model_distance3} and \eqref{eq:stronglyconvex_smooth2} yields
\begin{equation}
\label{eq:model_distance4}
\begin{split}
    \mathbb{E}[r_{n+1}^2] \leq \mathbb{E}[r_n^2] - \frac{2\beta}{K\sqrt{\eta}}\sum_{k=1}^K\sqrt{p_k}|h_k|\bigg(\frac{\mu L}{\mu+L}\mathbb{E}[r_n^2] \\+\frac{1}{\mu+L}\mathbb{E}\left[||\nabla F(\vtW_n)||^2\right]\bigg)\\
    +\frac{\beta^2}{K\eta}\sum_{k=1}^Kp_k|h_k|^2(||\vtSigma||^2 + \mathbb{E}[||\nabla F(\vtW_n)||^2]) + \frac{d\sigma_z^2\beta^2}{MK^2\eta}.
\end{split}
\end{equation}
Since this expression is getting long, we use three constants $c_2$, $c_3$, and $c_4$ to simplify it. The model distance is then
\begin{equation}
\label{eq:bound_2}
    \mathbb{E}[r_{n+1}^2] \leq c_2\mathbb{E}[r_n^2] + c_3 + c_4\mathbb{E}[||\nabla F(\vtW_n)||^2],
\end{equation}
where $c_2$ and $c_3$ were defined in \eqref{eq:convergence_constant} and \eqref{eq:c3_constant} respectively. Because of our choice of learning rate in Assumption \ref{as:learning_rate}, we have the following inequality for $c_4$
\begin{equation}
    c_4 \coloneqq \frac{\beta^2}{K\eta}\sum_{k=1}^Kp_k|h_k|^2 - \frac{2\beta}{K\sqrt{\eta}(\mu+L)}\sum_{k=1}^K\sqrt{p_k}|h_k| < 0.
\end{equation}
Since $c_4$ is less than zero, we can rewrite our bound in \eqref{eq:bound_2} as
\begin{equation}
\label{eq:iteration1}
    \mathbb{E}[r_{n+1}^2] \leq c_2\mathbb{E}[r_n^2] + \beta^2c_3.
\end{equation}
At this point, the bound is almost complete. The only thing that remains is to find an inequality comparing $\mathbb{E}[r_{n}^2]$ and $\mathbb{E}[r_{0}^2]$ instead of comparing two adjacent communication rounds. As such, we reduce the iteration counter by one, which yields
\begin{equation}
\label{eq:iteration2}
    \mathbb{E}[r_{n}^2] \leq c_2\mathbb{E}[r_{n-1}^2] + \beta^2c_3.
\end{equation}
We then combine \eqref{eq:iteration1} and \eqref{eq:iteration2} to get
\begin{equation}
\begin{split}
    \mathbb{E}[r_{n+1}^2] \leq c_2^2\mathbb{E}[r_{n-1}^2] + (c_2+1)\beta^2c_3.
\end{split}
\end{equation}
By induction we have
\begin{equation}
    \mathbb{E}[r_{n}^2] \leq c_2^n\mathbb{E}[r_0^2] + \beta^2c_3\sum_{i=0}^{n-1}c_2^i.
\end{equation}
Then we apply $\sum_{i=0}^{n-1}c_2^i < \sum_{i=0}^{\infty}c_2^i = 1/(1-c_2)$ to achieve
\begin{equation}
\label{eq:bound3}
    \mathbb{E}[r_{n}^2] \leq c_2^n\mathbb{E}[r_0^2] + \frac{\beta^2}{(1-c_2)}c_3.
\end{equation}
Finally, we utilize convexity and Lipschitz smoothness from \eqref{eq:convex_smooth} to relate the LHS of \eqref{eq:bound3} to the \ac{FL} loss, which yields
\begin{equation}
\begin{split}
    \mathbb{E}\left[F(\vtW_n)\right] - F(\vtW^*) \leq \frac{L}{2}c_2^n\mathbb{E}[r_0^2] + \frac{\beta^2L}{2(1-c_2)}c_3,
\end{split}
\end{equation}
which is the bound from Proposition \ref{proposition:stronglyconvex}. \qed
\begin{remark}
The $c_2$ variable determines the convergence rate of the algorithm and must fulfill $c_2 < 1$ to guarantee convergence in expectation. Given that the static learning rate $\beta$ follows Assumption \ref{as:learning_rate}, this inequality will always hold. It is easily proven by substituting the $\beta$ from \eqref{eq:learning_rate} into
\begin{equation}
    c_2 \coloneqq 1-\frac{2\beta}{K\sqrt{\eta}}\frac{\mu L}{\mu + L}\sum_{k=1}^K\sqrt{p_k}|h_k|.
\end{equation}
\end{remark}

\section{Proof of Proposition \ref{proposition:convex}}
\label{appendix:convex}
Just as in the first proof, we utilize the properties of convexity and Lipschitz smoothness to relate distance between the optimal global model $\vtW^*$ and the current global model $\vtW_n$ to the \ac{FL} loss function. In contrast to the first proof, we use these properties immediately. Specifically, we start with Lemma \ref{lemma:convex_smooth} and take the expectation on both sides to get
\begin{equation}
\begin{split}
    \mathbb{E}\left[F(\vtW_{n+1})\right] \leq \mathbb{E}\left[F(\vtW_{n})\right] + \mathbb{E}\left[\nabla F(\vtW_{n})^T(\vtW_{n+1}-\vtW_{n})\right] \\
    + \frac{L}{2}\mathbb{E}\left[||\vtW_{n+1}-\vtW_{n}||^2\right].
\end{split}
\end{equation}
Then, we add the global update via \eqref{eq:update_rule_ota} to get
\begin{equation}
\begin{split}
    \mathbb{E}\left[F(\vtW_{n+1})\right] \leq \mathbb{E}\left[F(\vtW_{n})\right] - \beta\mathbb{E}\left[\nabla F(\vtW_{n})^T\Delta\hat{\vtW}_n\right] \\
    + \frac{L\beta^2}{2}\mathbb{E}\left[||\Delta\hat{\vtW}_n||^2\right].
\end{split}
\end{equation}
The impact of the wireless channel is introduced in \eqref{eq:global_update_proof}. Then, following steps similar to \eqref{eq:expected_value_1}-\eqref{eq:expected_value} gives us
\begin{equation}
\begin{split}
    \mathbb{E}\left[F(\vtW_{n+1})\right] \leq \mathbb{E}\left[F(\vtW_{n})\right] \\
    - \frac{\beta}{K\sqrt{\eta}}\sum_{k=1}^K\sqrt{p_k}|h_k|\mathbb{E}\left[||\nabla F(\vtW_{n})||^2\right] \\
     + \frac{L\beta^2}{2}\mathbb{E}\left[||\Delta\hat{\vtW}_n||^2\right].
\end{split}
\end{equation}
Then, we insert the bound for $\mathbb{E}\left[||\Delta\hat{\vtW}_n||^2\right]$ from \eqref{eq:expected_square} to get
\begin{equation}
\label{eq:bound4}
\begin{split}
    &\mathbb{E}\left[F(\vtW_{n+1})\right] \leq \mathbb{E}\left[F(\vtW_{n})\right] \\
    &- \frac{\beta}{K\sqrt{\eta}}\sum_{k=1}^K\sqrt{p_k}|h_k|\mathbb{E}\left[||\nabla F(\vtW_{n})||^2\right] \\
    &+ \frac{L\beta^2}{2K\eta}\left(\sum_{k=1}^Kp_k|h_k|^2(||\vtSigma||^2+\mathbb{E}[||\nabla F(\vtW_n)||^2]) + \frac{d\sigma_z^2}{MK}\right).
\end{split}
\end{equation}
Next, we recognize $c_3$ from \eqref{eq:c3_constant} and substitute it into the bound
\begin{equation}
\begin{split}
    &\mathbb{E}\left[F(\vtW_{n+1})\right] \leq \mathbb{E}\left[F(\vtW_{n})\right] \\
    &- \frac{\beta}{K\sqrt{\eta}}\sum_{k=1}^K\sqrt{p_k}|h_k|\left(1-\frac{L\beta}{2\sqrt{\eta}}\frac{\sum_{k=1}^Kp_k|h_k|^2}{\sum_{k=1}^K\sqrt{p_k}|h_k|}\right)\\
    &\mathbb{E}\left[||\nabla F(\vtW_{n})||^2\right] + \frac{L\beta^2}{2}c_3.
\end{split}
\end{equation}
Then we do the same for $c_1$
\begin{equation}
\begin{split}
    &\mathbb{E}\left[F(\vtW_{n+1})\right] \leq \mathbb{E}\left[F(\vtW_{n})\right] \\
    &- \frac{\beta c_1}{K}\left(1-\frac{L\beta}{2\sqrt{\eta}}\frac{\sum_{k=1}^Kp_k|h_k|^2}{\sum_{k=1}^K\sqrt{p_k}|h_k|}\right)\mathbb{E}\left[||\nabla F(\vtW_{n})||^2\right]\\
    & + \frac{L\beta^2}{2}c_3.
\end{split}
\end{equation}
This expression can be simplified by using Assumption \ref{as:learning_rate2} to bound the second term on the RHS
\begin{equation}
\label{eq:bound5}
\begin{split}
    &\mathbb{E}\left[F(\vtW_{n+1})\right] \leq \\ &\mathbb{E}\left[F(\vtW_{n})\right] - \frac{\beta c_1}{2K}\mathbb{E}\left[||\nabla F(\vtW_{n})||^2\right]
    + \frac{L\beta^2}{2}c_3.
\end{split}
\end{equation}
Next, we are going to upper bound the first term on the RHS of \eqref{eq:bound5}. We use the following standard property of convexity (see equation 3.2 from \cite{boyd2004convex}):
\begin{equation}
    \mathbb{E}[F(\vtW_n)] \leq F(\vtW^*) + \mathbb{E}[\nabla F(\vtW_n)^T(\vtW_n-\vtW^*)].
\end{equation}
Plug this into \eqref{eq:bound5}
\begin{equation}
\label{eq:bound6}
\begin{split}
    &\mathbb{E}\left[F(\vtW_{n+1})\right] \leq F(\vtW^*) + \mathbb{E}[\nabla F(\vtW_n)^T(\vtW_n-\vtW^*)] \\
    &- \frac{\beta c_1}{2K}\mathbb{E}\left[||\nabla F(\vtW_{n})||^2\right]
    + \frac{L\beta^2}{2}c_3.
\end{split}
\end{equation}
Some tedious algebraic manipulation transforms \eqref{eq:bound6} to
\begin{equation}
\label{eq:bound7}
\begin{split}
    &\mathbb{E}\left[F(\vtW_{n+1})\right] \leq F(\vtW^*) \\
    &+ \frac{K}{2\beta c_1}\mathbb{E}\bigg[||\vtW_n-\vtW^*||^2 -||(\vtW_n-\vtW^*) -\frac{\beta c_1}{K}\nabla F(\vtW_n)||^2\bigg] \\
    &+ \frac{L\beta^2}{2}c_3.
\end{split}
\end{equation}
Let $r_n^2 \coloneqq ||\vtW_n - \vtW^*||^2$ (same as \eqref{eq:model_distance}) and 
\begin{equation}
    \tilde{r}_{n+1}^2 \coloneqq ||\vtW_n - \vtW^* - \frac{\beta c_1}{K}\nabla F(\vtW_n)||^2.
\end{equation}
Then \eqref{eq:bound7} becomes
\begin{equation}
\label{eq:bound8}
\begin{split}
    &\mathbb{E}\left[F(\vtW_{n+1})\right] \leq F(\vtW^*) \\
    &+ \frac{K}{2\beta c_1}\mathbb{E}\left[r_n^2-\tilde{r}_{n+1}^2\right] + \frac{L\beta^2}{2}c_3.
\end{split}
\end{equation}
Now, just like in the previous proof, we want to form a bound with respect to $r_0^2$. However, instead of using induction we use a telescoping sum. To set it up, we start by taking a sum of \eqref{eq:bound8} over $n$ iterations to get
\begin{equation}
\label{eq:telescoping_sum}
\begin{split}
    &\sum_{i=1}^n\mathbb{E}\left[F(\vtW_{i})\right] -nF(\vtW^*) \leq \\
    &+ \frac{K}{2\beta c_1}\sum_{i=1}^n\mathbb{E}\left[r_{i-1}^2-\tilde{r}_{i}^2\right] + \frac{nL\beta^2}{2}c_3.
\end{split}
\end{equation}
The sum $\sum_{i=1}^n\mathbb{E}\left[r_{i-1}^2-\tilde{r}_{i}^2\right]$ can be rewritten as
\begin{equation}
    \sum_{i=1}^n\mathbb{E}\left[r_{i-1}^2-\tilde{r}_{i}^2\right] = \mathbb{E}[r_0^2]-\mathbb{E}[\tilde{r}_n^2] + \sum_{i=0}^{n-2}\mathbb{E}[r_{i+1}^2-\tilde{r}_{i+1}^2]
\end{equation}
and the middle terms $\sum_{i=0}^{n-2}\mathbb{E}[r_{i+1}^2-\tilde{r}_{i+1}^2]$ will be upper bounded to a constant. We develop this bound next. To start, we plug in the definition of $r_{i+1}^2$ and $\tilde{r}_{i+1}^2$ into $\mathbb{E}[r_{i+1}^2-\tilde{r}_{i+1}^2]$:
\begin{equation}
\begin{split}
    &\mathbb{E}\left[r_{i+1}^2-\tilde{r}_{i+1}^2\right] = \\ &\mathbb{E}\bigg[||\vtW_{i+1}-\vtW^*||^2 -||\vtW_i-\vtW^*-\frac{\beta c_1}{K}\nabla F(\vtW_i)||^2\bigg].
\end{split}
\end{equation}
Applying \eqref{eq:update_rule_ota} and doing some algebra yields
\begin{equation}
\begin{split}
    &\mathbb{E}\left[r_{i+1}^2-\tilde{r}_{i+1}^2\right] =\beta\mathbb{E}\left[\beta||\Delta\hat{\vtW}_i||^2-2(\vtW_i-\vtW^*)^T\Delta\hat{\vtW}_i\right] \\
    & +\beta\mathbb{E}\bigg[ \frac{2c_1}{K}(\vtW_i-\vtW^*)^T\nabla F(\vtW_i) -\frac{\beta c_1^2}{K^2}||\nabla F(\vtW_i )||^2\bigg].
\end{split}
\end{equation}
Then we apply \eqref{eq:global_update_proof} and after some algebra we have
\begin{equation}
\begin{split}
    &\mathbb{E}\left[r_{i+1}^2-\tilde{r}_{i+1}^2\right]\\ &=\beta^2\mathbb{E}\left[||\Delta\hat{\vtW}_i||^2-\frac{c_1^2}{K^2}||\nabla F(\vtW_i)||^2\right].
\end{split}
\end{equation}
Next, we insert $||\Delta\hat{\vtW}_i||^2$ from \eqref{eq:expected_square} and do some algebra which yields
\begin{equation}
\begin{split}
    &\mathbb{E}\left[r_{i+1}^2-\tilde{r}_{i+1}^2\right]=\beta^2\mathbb{E}\bigg[\frac{1}{K^2\eta}\left(\sum_{k=1}^K\sqrt{p_k}|h_k|\right)^2||\Delta \vtW_{n,k}||^2 \\
    &+ \frac{d\sigma_z^2}{MK^2\eta} - \frac{c_1^2}{K^2}||\nabla F(\vtW_i)||^2\bigg].
\end{split}
\end{equation}
Assumption \ref{as:variance_bound} together with some algebra gives us
\begin{equation}
\begin{split}
    &\mathbb{E}\left[r_{i+1}^2-\tilde{r}_{i+1}^2\right] \leq \beta^2\frac{c_1^2}{K^2}||\vtSigma||^2+\beta^2\frac{d\sigma_z^2}{MK^2\eta}.
\end{split}
\end{equation}
Finally, we apply the Cauchy-Schwarz inequality to get
\begin{equation}
\begin{split}
    &\mathbb{E}\left[r_{i+1}^2-\tilde{r}_{i+1}^2\right] \leq\\ &\beta^2||\vtSigma||^2\frac{\sum_{k=1}^Kp_k|h_k|^2}{K\eta} + \beta^2\frac{d\sigma_z^2}{MK^2\eta} = \beta^2c_3
\end{split}
\end{equation}
That concludes the upper bound on $\mathbb{E}\left[r_{i+1}^2-\tilde{r}_{i+1}^2\right]$ so we plug it back into \eqref{eq:telescoping_sum} to get
\begin{equation}
\label{eq:bound9}
\begin{split}
    &\sum_{i=1}^n\mathbb{E}\left[F(\vtW_{i})\right] -nF(\vtW^*) \leq \\
    &\frac{K}{2\beta c_1}\mathbb{E}\left[r_{0}^2-\tilde{r}_n^2+(n-1)\beta^2c_3\right] + \frac{nL\beta^2}{2}c_3.
\end{split}
\end{equation}
Since $\tilde{r}_n^2$ and $c_3$ are positive, we can add one of each to the expectation in the RHS of \eqref{eq:bound9} without breaking the inequality, which yields
\begin{equation}
\label{eq:bound10}
\begin{split}
    \sum_{i=1}^n\mathbb{E}\left[F(\vtW_{i})\right] -nF(\vtW^*) \leq \frac{K}{2\beta c_1}\mathbb{E}\left[r_{0}^2\right]
    + \frac{n\beta}{2}(\frac{K}{c_1}+L\beta)c_3.
\end{split}
\end{equation}
Finally, we note that $\mathbb{E}\left[F(\vtW_{n})\right] \leq \mathbb{E}\left[F(\vtW_{i})\right]$ for all $i \leq n$. Therefore
\begin{equation}
\begin{split}
    &\mathbb{E}\left[F(\vtW_{n})\right] - F(\vtW^*) \leq \frac{1}{n}\sum_{i=1}^n\mathbb{E}\left[F(\vtW_{i})\right] - F(\vtW^*) \leq\\
    &\frac{K}{2n\beta c_1}\mathbb{E}\left[r_{0}^2\right] + \frac{\beta}{2}\left(\frac{K}{c_1}+L\beta\right)c_3,
\end{split}
\end{equation}
which is the bound from Proposition \ref{proposition:convex}. \qed
\ifCLASSOPTIONcaptionsoff
  \newpage
\fi








\bibliographystyle{IEEEtran}
\bibliography{library}

\end{document}